\newtheorem{theorem}{Theorem}
\newtheorem{lemma}{Lemma}
\DeclareMathOperator{\poly}{poly}
\DeclareMathOperator{\polylog}{polylog}
\begin{document}

\title{Quantum speedup of the Travelling Salesman Problem for bounded-degree graphs}

\author{Dominic J. Moylett}
\email[Corresponding author: ]{dominic.moylett@bristol.ac.uk}
\affiliation{Quantum Engineering Technology Labs, H. H. Wills Physics Laboratory and Department of Electrical \& Electronic Engineering, University of Bristol, BS8 1FD, UK}
\affiliation{Quantum Engineering Centre for Doctoral Training, H. H. Wills Physics Laboratory and Department of Electrical \& Electronic Engineering, University of Bristol, BS8 1FD, UK}
\affiliation{Heilbronn Institute for Mathematical Research, University of Bristol, BS8 1SN, UK}
\author{Noah Linden}
\email{n.linden@bristol.ac.uk}
\affiliation{School of Mathematics, University of Bristol, BS8 1TW, UK}
\author{Ashley Montanaro}
\email{ashley.montanaro@bristol.ac.uk}
\affiliation{School of Mathematics, University of Bristol, BS8 1TW, UK}

\date{\today}

\begin{abstract}
The Travelling Salesman Problem is one of the most famous problems in graph theory. However, little is currently known about the extent to which quantum computers could speed up algorithms for the problem. In this paper, we prove a quadratic quantum speedup when the degree of each vertex is at most $3$ by applying a quantum backtracking algorithm to a classical algorithm by Xiao and Nagamochi. We then use similar techniques to accelerate a classical algorithm for when the degree of each vertex is at most $4$, before speeding up higher-degree graphs via reductions to these instances.
\end{abstract}

\maketitle


\section{Introduction}

A salesman has a map of $n$ cities that they want to visit, including the roads between the cities and how long each road is. Their aim is to start at their home, visit each city and then return home. To avoid wasting time, they want to visit each city exactly once and travel via the shortest route. So what route should the salesman take?

This is an instance of the Travelling Salesman Problem (TSP). More generally, this problem takes an undirected graph $G = (V, E)$ of $n$ vertices connected by $m$ weighted edges and returns the shortest cycle which passes through every vertex exactly once, known as a Hamiltonian cycle, if such a cycle exists. If no Hamiltonian cycle exists, we should report that no Hamiltonian cycle has been found. The length or cost of an edge is given by an $n \times n$ matrix $C = (c_{ij})$ of positive integers, known as a cost matrix. This problem has a number of applications, ranging from route finding as in the story above to circuit board drilling~\cite{grotschel1991}.

Unfortunately, the salesman might have to take a long time in order to find the shortest route. The TSP has been shown to be NP-hard~\cite[Chapter $3$]{lawler1985}, suggesting that even the best algorithms for exactly solving it must take a superpolynomial amount of time. Nevertheless, the importance of the problem has motivated a substantial amount of classical work to develop algorithms for solving it provably more efficiently than the na\"ive algorithm which checks all $O((n-1)!)$ of the potential Hamiltonian cycles in the graph. Here we consider whether these algorithms can be accelerated using quantum computational techniques.

Grover's famous quantum algorithm~\cite{grover96} for fast unstructured search can be applied to the na\"ive classical algorithm to achieve a runtime of $O(\sqrt{n!})$, up to polynomial terms in $n$. However, the best classical algorithms are already substantially faster than this. For many years, the algorithm with the best proven worst-case bounds for the general TSP was the Held-Karp algorithm~\cite{held1962}, which runs in $O(n^22^n)$ time and uses $O(n2^n)$ space. This algorithm uses the fact that for any shortest path, any subpath visiting a subset of vertices on that path must be the shortest path for visiting those vertices. Held and Karp used this to solve the TSP by computing the length of the optimal route for starting at vertex $1$, visiting every vertex in a set $S \subseteq V$ and finishing at a vertex $l \in S$. Denoting the length of this optimal route $D(S, l)$, they showed that this distance could be computed as
\[
D(S, l) = \begin{cases} c_{1l} & \text{if } S = \{l\}\\
          \min_{m \in S \setminus \{l\}}\left[D(S \setminus \{l\}, m) + c_{ml}\right] & \text{otherwise.}
  \end{cases}
\]
Solving this relation recursively for $S=V$ would result in iterating over all $O((n-1)!)$ Hamiltonian cycles again, but Held and Karp showed that the relation could be solved in $O(n^22^n)$ time using dynamic programming. Bj{\"o}rklund et al.\ \cite{bjorklund2008} developed on this result, showing that modifications to the Held-Karp algorithm could yield a runtime of
\[ O((2^{k + 1} - 2k - 2)^{n/(k + 1)}\poly(n)), \]
where $k$ is the largest degree of any vertex in the graph; this bound is strictly less than $O(2^n)$ for all fixed $k$. Unfortunately, it is not known whether quantum algorithms can accelerate general dynamic programming algorithms. Similarly, it is unclear whether TSP algorithms based around the standard classical techniques of branch-and-bound~\cite{little1963} or branch-and-cut~\cite{padberg1991} are amenable to quantum speedup.



Here we apply known quantum-algorithmic techniques to accelerate more recent classical TSP algorithms for the important special case of bounded-degree graphs. We say that a graph $G$ is degree-$k$ if the maximal degree of any vertex in $G$ is at most $k$. A recent line of research has produced a sequence of algorithms which improve on the $O^*(2^n)$ runtime of the general Held-Karp algorithm in this setting, where the notation $O^*(c^n)$ omits polynomial factors in $n$. First, Eppstein presented algorithms which solve the TSP on degree-3 graphs in time $O^*(2^{n/3}) \approx O^*(1.260^n)$, and on degree-4 graphs in time $O^*((27/4)^{n/3}) \approx O^*(1.890^n)$~\cite{eppstein2007}. The algorithms are based on the standard classical technique of {\em backtracking}, an approach where a tree of partial solutions is explored to find a complete solution to a problem (see Section \ref{sec:backtrack} for an introduction to this technique). Following subsequent improvements~\cite{iwama07,liskiewicz14}, the best classical runtimes known for algorithms based on this general approach are $O^*(1.232^n)$ for degree-3 graphs~\cite{xiao2016degree3}, and $O^*(1.692^n)$ for degree-4 graphs~\cite{xiao2016degree4}, in each case due to Xiao and Nagamochi. All of these algorithms use polynomial space in $n$.

An algorithm of Bodlaender et al.~\cite{bodlaender15} achieves a faster runtime of $O^*(1.219^n)$ for solving the TSP in degree-3 graphs, which is the best known; however, this algorithm uses exponential space. Similarly, an algorithm of Cygan et al.~\cite{cygan11} solves the TSP in unweighted degree-4 graphs in $O^*(1.588^n)$ time and exponential space. Both of these algorithms use an approach known as cut-and-count, which is based on dynamic programming, so a quantum speedup is not known for either algorithm.

In the case where we have an upper bound $L$ on the maximum edge cost in the graph, Bj\"orklund~\cite{bjorklund14} gave a randomised algorithm which solves the TSP on arbitrary graphs in $O^*(1.657^n L)$ time and polynomial space, which is an improvement on the runtime of the Xiao-Nagamochi algorithm for degree-4 graphs when $L$ is subexponential in $n$. Again, the techniques used in this algorithm do not seem obviously amenable to quantum speedup.

Here we use a recently developed quantum backtracking algorithm~\cite{montanaro2015} to speed up the algorithms of Xiao and Nagamochi in order to find Hamiltonian cycles shorter than a given upper bound, if such cycles do exist. We run this algorithm several times, using binary search to specify what our upper bound should be, in order to find the shortest Hamiltonian cycle and solve the Travelling Salesman Problem. In doing so, we achieve a near-quadratic reduction in the runtimes:

\begin{theorem}
There are bounded-error quantum algorithms which solve the TSP on degree-3 graphs in time $O^*(1.110^n \log L \log \log L)$ and on degree-4 graphs in time $O^*(1.301^n \log L \log \log L)$, where $L$ is the maximum edge cost. The algorithms use $\poly(n)$ space.
\label{thm:deg34}
\end{theorem}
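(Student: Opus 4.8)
The plan is to reduce the optimisation problem to a sequence of decision problems and to accelerate each decision problem with the quantum backtracking algorithm of~\cite{montanaro2015}. For a fixed integer threshold $k$, consider the problem of deciding whether $G$ contains a Hamiltonian cycle of total cost at most $k$. Since every edge cost is a positive integer bounded by $L$, the cost of any Hamiltonian cycle lies in $\{1,\dots,nL\}$, so a binary search over $k$ using $O(\log(nL))$ calls to a solver for the decision problem (the $\log n$ term being absorbed into the polynomial factors hidden by $O^*$) suffices to pin down the optimal cost; one further call, using the search version rather than the detection version of the backtracking algorithm, then recovers an optimal cycle itself, and if no call ever succeeds we report that no Hamiltonian cycle exists.

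The key step is to observe that the classical degree-$3$ and degree-$4$ algorithms of Xiao and Nagamochi~\cite{xiao2016degree3,xiao2016degree4} are backtracking (branch-and-reduce) algorithms in the sense required by~\cite{montanaro2015}: each node of the recursion tree is labelled by a partial assignment specifying which edges have been forced into, and which excluded from, the eventual tour, together with the reduced graph obtained by applying the algorithms' deterministic reduction rules; the algorithm branches on a bounded number of children at each node; the depth of the tree is $\poly(n)$; and both the predicate that tests whether a node already determines a Hamiltonian cycle and the heuristic that picks the next edge to branch on are computable in $\poly(n)$ time and space. The classical worst-case analyses bound the number of \emph{leaves}, and hence (since the depth is polynomial and the branching bounded) the total number of nodes $T$ of the recursion tree, by $O^*(1.232^n)$ in the degree-$3$ case and $O^*(1.692^n)$ in the degree-$4$ case. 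For the decision problem with threshold $k$ we simply change the leaf predicate so that a node is marked only if its forced edges constitute a Hamiltonian cycle of cost at most $k$, and add the pruning rule that a node is discarded as soon as its forced edges already have cost exceeding $k$; this can only shrink the tree, so the same bounds on $T$ still hold.

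Given this, the quantum backtracking algorithm of~\cite{montanaro2015} detects the presence of a marked node --- equivalently, decides whether a Hamiltonian cycle of cost at most $k$ exists --- in $O(\sqrt{T}\,\poly(n))$ time and $\poly(n)$ space, with (say) constant one-sided error; if an a priori upper bound on $T$ is unavailable one uses the standard doubling trick at the cost of a constant factor. Amplifying each of the $O(\log(nL))$ decision calls to error $O(1/\log L)$ requires $O(\log\log L)$ repetitions and a majority vote, and a union bound over all calls then yields overall bounded error; this is the origin of the $\log L\log\log L$ factor. Since $\sqrt{1.232}<1.110$ and $\sqrt{1.692}<1.301$, substituting the tree-size bounds gives running times $O^*(1.110^n\log L\log\log L)$ and $O^*(1.301^n\log L\log\log L)$ respectively, and every component of the algorithm uses only $\poly(n)$ space.

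The main obstacle is the second step: one must check carefully that the Xiao--Nagamochi algorithms, with their numerous reduction rules and case distinctions, really do fit the backtracking template --- in particular that their recursion can be presented purely as ``branch on a variable, then apply deterministic simplifications'', that no global bookkeeping outside the tree is required, and that the published branching-vector analyses genuinely bound the number of tree nodes rather than merely the classical running time. One must also verify that introducing the threshold $k$ and the associated cost-based pruning does not interfere with the correctness of any reduction rule (for instance, rules that contract forced paths must correctly accumulate the contracted cost), and arrange the bookkeeping so that the search version of the algorithm outputs an actual optimal tour and not merely its cost.
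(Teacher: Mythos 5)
Your proposal is correct and takes essentially the same route as the paper: apply the quantum backtracking algorithm of~\cite{montanaro2015} to the Xiao--Nagamochi branch-and-reduce algorithms~\cite{xiao2016degree3,xiao2016degree4}, bound the tree size by the classical running time, prune via a cost threshold in the predicate, and binary-search over the threshold with per-call failure probability reduced to $O(1/\log L)$, which is exactly where the paper's $\log L \log\log L$ factor comes from. The ``main obstacle'' you flag---verifying that the Xiao--Nagamochi recursion really fits the predicate-plus-heuristic template---is precisely what the paper's Section~\ref{sec:bd} carries out, by defining a polynomial-time reduction function that replays the forced/removed-edge assignments on the original graph and by checking (e.g.\ that every edge lies in at most one circuit) that this replay is well defined.
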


In this result and elsewhere in the paper, ``bounded-error'' means that the probability that the algorithm either doesn't find a Hamiltonian cycle when one exists or returns a non-optimal Hamiltonian cycle is at most $1/3$. This failure probability can be reduced to $\delta$, for arbitrary $\delta > 0$, by repeating the algorithm $O(\log 1/\delta)$ times. Also here and throughout the paper, $\log$ denotes $\log$ base 2. Note that the time complexity of our algorithms has some dependence on $L$, the largest edge cost in the input graph. However, this dependence is quite mild. For any graph whose edge costs are specified by $w$ bits, $L \le 2^w$. Thus terms of the form $\polylog(L)$ are at most polynomial in the input size.

Next, we show that degree-5 and degree-6 graphs can be dealt with via a randomised reduction to the degree-4 case.

\begin{theorem}
\label{thm:deg6}
There is a bounded-error quantum algorithm which solves the TSP on degree-5 and degree-6 graphs in time $O^*(1.680^n\log L \log \log L)$. The algorithm uses $\poly(n)$ space.
\end{theorem}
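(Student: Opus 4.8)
The plan is to reduce the degree-$5$ and degree-$6$ cases to the degree-$4$ case already solved in Theorem~\ref{thm:deg34}, at the price of an exponential number of ``guesses'', and then to absorb that overhead with amplitude amplification so that only a square-root of it survives. As in the proof of Theorem~\ref{thm:deg34} I would fix a target cost $B$, solve the decision problem ``does $G$ contain a Hamiltonian cycle of cost at most $B$?'', and reinstate the binary search over $B\in\{1,\dots,nL\}$ only at the very end; that outer search is what contributes the $\log L\log\log L$ factor, and extracting the cycle itself is handled by the same bookkeeping used for Theorem~\ref{thm:deg34}.

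For the reduction, let $H$ be an (unknown) optimal Hamiltonian cycle of $G$. Since $H$ uses exactly two edges at every vertex, $G-H$ has maximum degree at most $k-2$. When $k=6$, if we delete from $G$ a spanning $2$-regular subgraph (a $2$-factor) $F$ that is edge-disjoint from $H$, then $G-F$ has maximum degree $4$ and still contains $H$; when $k=5$ the analogous object is a perfect matching $M$ disjoint from $H$, whose removal lowers every degree by exactly one and again leaves maximum degree $4$. The reduction is therefore: sample such an $F$ (resp.\ $M$) uniformly at random, delete it, and run the degree-$4$ algorithm of Theorem~\ref{thm:deg34} on the result with bound $B$. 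Soundness is immediate --- a subgraph of $G$ has no short Hamiltonian cycle if $G$ does not --- and the success probability is
\[
\Pr[\,H\cap F=\emptyset\,]=\frac{\#\{2\text{-factors of }G-H\}}{\#\{2\text{-factors of }G\}}
\]
(with ``$2$-factor'' replaced by ``perfect matching'' for $k=5$), so the quantitative heart of the argument is to lower-bound this ratio.

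Here I would use that $G-H$ is $4$-regular (resp.\ $3$-regular for $k=5$), which forces the numerator to be exponentially large, while the degree bound on $G$ controls the denominator from above via a Br\`egman-type estimate applied to the standard gadget that turns $2$-factor counting into perfect-matching counting. Combining the two yields $\Pr[\text{success}]\ge \Omega(c^{-n})$ for an explicit constant $c$; the degree-$5$ case gives a strictly smaller constant, so the degree-$6$ case is the bottleneck. On the quantum side, each sampled instance is a degree-$4$ weighted graph on $n$ vertices, so Theorem~\ref{thm:deg34} solves its decision problem with bounded error in time $O^*(1.301^n\log L\log\log L)$; treating this as a (bounded-error) predicate on the $O(c^n)$-size space of random choices and applying amplitude amplification over that space --- first boosting the predicate's success probability to $1-o(1)$, or using the variant of amplitude amplification tolerant of a bounded-error checker --- finds a good instance, if one exists, in $O^*(c^{n/2})$ predicate calls. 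The total running time is $O^*(c^{n/2}\cdot 1.301^n\log L\log\log L)$, which is $O^*(1.680^n\log L\log\log L)$ provided the counting bound gives $c\le 5/3$ (so that $\sqrt{c}\cdot 1.301\le 1.680$); equivalently one can splice the $O(c^n)$ copies of the degree-$4$ backtracking tree under a common root and run the quantum backtracking algorithm once on the combined tree of size $O^*(c^{n}\,1.692^n)$. Polynomial space is preserved because the reduction is computed on the fly and both Theorem~\ref{thm:deg34} and the quantum backtracking subroutine use polynomial space.

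The main obstacle I anticipate is exactly the counting step: getting a clean enough lower bound on the ratio of $2$-factor counts (equivalently, a sharp enough upper bound on the number of $2$-factors of a degree-$\le 6$ graph relative to a $4$-regular subgraph obtained by deleting a $2$-factor) so that the constant $c$ is as small as $5/3$ --- this is where the precise value $1.680$ is pinned down. A secondary, more routine obstacle is the preprocessing needed to guarantee that a suitable $2$-factor (resp.\ perfect matching) exists and can be sampled when $G$ is irregular or of low edge-connectivity, and to deal with parity of $n$ in the degree-$5$ case; these can be handled by Petersen's $2$-factor theorem together with padding of vertices whose degree is below $k$, or by peeling off the finitely many low-degree vertices before applying the random deletion.
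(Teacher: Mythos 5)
Your overall strategy (reduce to the degree-4 algorithm of Theorem~\ref{thm:deg34}, pay an exponential guessing overhead, and recover a square root of it quantumly, with the binary search over the cost bound kept outside) is the same as the paper's, but the reduction you propose has a genuine gap at exactly the point you flag as the "quantitative heart". Your argument needs three things that are not established: (i) \emph{existence} of a 2-factor (resp.\ perfect matching) of $G$ disjoint from the optimal cycle $H$ --- $G-H$ only has maximum degree $4$ (resp.\ $3$), it need not be regular, bridgeless, or even have minimum degree $2$ (resp.\ $1$), so the required subgraph may simply not exist and the success probability of your sampler is then zero; Petersen-type theorems and "padding" do not obviously rescue this for arbitrary irregular degree-5/6 graphs. (ii) \emph{Sampling}: uniformly (or near-uniformly) sampling 2-factors or perfect matchings of a general non-bipartite graph is not a known polynomial-time primitive, so the object you amplitude-amplify over is not well defined as stated. (iii) The \emph{counting bound} $c\le 5/3$: the tools you invoke pull in opposite directions with far too little precision --- Br\`egman-type bounds give upper bounds like $(5!)^{n/10}\approx 1.61^n$ on the denominator, while the best known lower bounds on perfect matchings of (even bridgeless cubic) graphs are of the form $2^{n/3656}$, so the ratio you can actually certify is nowhere near $(3/5)^n$, and the constant $1.680$ is precisely what is at stake. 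Note also that the obvious local repair (guess, at each degree-6 vertex, which two non-cycle edges to delete) gives a ratio of $\binom{6}{2}/\binom{4}{2}=5/2$ per vertex, yielding roughly $O^*(2.06^n)$, worse than the claimed bound.

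The paper avoids all of this by never deleting edges: each degree-5 vertex is \emph{split} into a degree-3 and a degree-4 vertex joined by a new forced edge (each degree-6 vertex into two degree-4 vertices), so all original edges survive and the forced edge guarantees the degree-4 analysis is not degraded by the extra vertices. The key point is purely local and unconditional: there are $10$ ways to split a vertex, and a split is compatible with the (unknown) optimal cycle iff its two cycle edges land on different sides, which happens for $6$ of the $10$ splits. With $f\le n$ high-degree vertices one then runs D\"urr--H\o yer minimum finding over the $10^f$ splittings, of which $6^f$ are good, costing $O((10/6)^{f/2})$ invocations of the degree-4 algorithm and giving $\sqrt{10/6}\cdot 1.301 \approx 1.680$ with only a local, trivially verified combinatorial fact in place of your global 2-factor-counting estimate. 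If you want to salvage your route, you would need to replace the uniform 2-factor/matching sampler by an explicitly constructible product distribution over local choices and prove the per-vertex success ratio is at least $3/5$ --- which is essentially what the vertex-splitting construction achieves.
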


We summarise our results in Table \ref{tab:summary}.

\begin{table*}
\begin{center}
\begin{tabular}{|c|c|c|c|}
\hline Degree & Quantum & Classical (poly space) & Classical (exp space) \\
\hline 3 & $O^*(1.110^n \polylog L)$ & $O^*(1.232^n)$~\cite{xiao2016degree3} & $O^*(1.219^n)$~\cite{bodlaender15} \\
 4 & $O^*(1.301^n \polylog L)$ & $O^*(1.692^n)$~\cite{xiao2016degree4}, $O^*(1.657^n L)$~\cite{bjorklund14} & $O^*(1.588^n)$~\cite{cygan11}\\
 5, 6 & $O^*(1.680^n \polylog L)$ & $O^*(1.657^n L)$~\cite{bjorklund14} & --- \\
\hline
\end{tabular}
\end{center}
\caption{Runtimes of our quantum algorithms for a graph of $n$ vertices with maximum edge cost $L$, compared with the best classical algorithms known.}
\label{tab:summary}
\end{table*}

\subsection{Related work}

Surprisingly little work has been done on quantum algorithms for the TSP. D\"orn \cite{dorn2007} proposed a quantum speedup for the TSP for degree-3 graphs by applying amplitude amplification \cite{brassard1997} and quantum minimum finding~\cite{durr1996} to Eppstein's algorithm, and stated a quadratic reduction in the runtime. However, we were not able to reproduce this result (see Section~\ref{sec:backtrack} below for a discussion).

Very recently, Mandr{\`a}, Guerreschi and Aspuru-Guzik~\cite{mandra2016} developed a quantum algorithm for finding a Hamiltonian cycle in time $O(2^{(k-2)n/4})$ in a graph where {\em every} vertex has degree $k$. Their approach reduces the problem to an Occupation problem, which they solve via a backtracking process accelerated by the quantum backtracking algorithm~\cite{montanaro2015}. The bounds obtained from their algorithm are $O(1.189^n)$ for $k = 3$ and $O(1.414^n)$ for $k=4$, in each case a bit slower than the runtimes of our algorithms; for $k \ge 5$, their algorithm has a slower runtime than Bj\"orklund's classical algorithm~\cite{bjorklund14}.

Marto\v{n}\'ak, Santoro and Tosatti \cite{martonak2004} explored the option of using quantum annealing to find approximate solutions for the TSP. Rather than solve the problem purely through quantum annealing, they simplify their Ising Hamiltonian for solving the TSP and use path-integral Monte Carlo \cite{barker1979} to run their model. While no bounds on run time or accuracy were strictly proven, they concluded by comparing their algorithm to simulated annealing via the Metropolis-Hastings algorithm \cite{metropolis1953} and the Kernighan-Lin algorithm for approximately solving the TSP \cite{kernighan1970}. Their results showed that ad hoc algorithms could perform better than general simulated or quantum annealing, but quantum annealing could outperform simulated annealing alone. However, they noted that simulated annealing could perform better than in their analysis if combined with local search heuristics~\cite{martin1996}.

Chen et al.\ \cite{chen11} experimentally demonstrated a quantum annealing algorithm for the TSP. Their demonstration used a nuclear-magnetic-resonance quantum simulator to solve the problem for a graph with 4 vertices.

\subsection{Organisation}

We start by introducing the main technique we use, backtracking, and comparing it with amplitude amplification. Then, in Section \ref{sec:bd}, we describe how this technique can be used to accelerate classical algorithms of Xiao and Nagamochi for graphs of degree at most 4~\cite{xiao2016degree3,xiao2016degree4}. In Section \ref{sec:higher-bound}, we extend this approach to graphs of degree at most 6.



\section{Backtracking algorithms for the TSP}
\label{sec:backtrack}

Many of the most efficient classical algorithms known for the TSP are based around a technique known as backtracking. Backtracking is a general process for solving constraint satisfaction problems, where we have $v$ variables and we need to find an assignment to these variables such that they satisfy a number of constraints. A na\"{i}ve search across all possible assignments will be inefficient, but if we have some local heuristics then we can achieve better performance by skipping assignments that will definitely fail.

Suppose each variable can be assigned one value from $[d] := \{0, \dots,d-1\}$. We define the set of partial assignments for $v$ variables as $\mathcal{D} := (\{1,\dots,v\}, [d])^j$, where $j \leq v$, with the first term denoting the variable to assign and the second denoting the value it is assigned. Using this definition for partial assignments, backtracking algorithms have two components. The first is a predicate, $P:\mathcal{D} \rightarrow \{\text{true}, \text{false}, \text{indeterminate}\}$, which takes a partial assignment and returns true if this assignment will definitely result in the constraints being satisfied regardless of how everything else is assigned, false if the assignment will definitely result in the constraints being unsatisfied, and indeterminate if we do not yet know. The second is a heuristic, $h:\mathcal{D} \rightarrow \{1,\dots,v\}$, which takes a partial assignment and returns the next variable to assign.

The following simple recursive classical algorithm takes advantage of $P$ and $h$ to solve a constraint satisfaction problem. We take as input a partial assignment (initially, the empty assignment). We run $P$ on this partial assignment; if the result is true then we return the partial assignment, and if it is false then we report that no solutions were found in this recursive call. We then call $h$ on this partial assignment and find out what the next variable to assign is. For every value in $i \in [d]$ we can assign that variable, we recursively call the backtracking algorithm with $i$ assigned to that variable. If one of the recursive calls returns a partial assignment then we return that assignment, otherwise we report that no solutions were found in this call. We can view this algorithm as exploring a tree whose vertices are labelled with partial assignments. The size of the tree determines the worst-case runtime of the algorithm, assuming that there is no assignment that satisfies all the constraints.

It is known that this backtracking algorithm can be accelerated using quantum techniques:

\begin{theorem}[Montanaro \cite{montanaro2015}]
\label{thm:backtrack}
Let $\mathcal{A}$ be a backtracking algorithm with predicate $P$ and heuristic $h$ that finds a solution to a constraint satisfaction problem on $v$ variables by exploring a tree of at most $T$ vertices. There is a quantum algorithm which finds a solution to the same problem with failure probability $\delta$ with $O(\sqrt{T}v^{3/2}\log v\log(1/\delta))$ uses of $P$ and $h$.
\end{theorem}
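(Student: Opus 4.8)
The plan is to build the quantum algorithm of Theorem~\ref{thm:backtrack} on top of the black box provided by Montanaro's detection algorithm, which decides whether a backtracking tree of size at most $T$ contains a marked (``true'') vertex using $\widetilde{O}(\sqrt{T})$ queries to $P$ and $h$. First I would recall the precise statement of that detection result: it has running time $O(\sqrt{T}\,\poly(v)\,\log(1/\delta))$ and failure probability $\delta$, where $v$ is the depth of the tree (so that each vertex is described by $O(v)$ bits). The key observation is that this detection primitive tells us whether \emph{some} solution exists in a given subtree, so we can locate an actual solution by descending through the tree one level at a time, always moving into a subtree that still contains a solution.

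The main steps I would carry out are as follows. (i) Run the detection algorithm on the full tree; if it reports ``no solution'' we are done (and are correct with probability $1-\delta$). (ii) Otherwise, let the heuristic $h$ tell us which variable to branch on at the root, and for each of the $\le d$ children in turn, run detection on the subtree rooted at that child. Move into the first child whose subtree is reported to contain a solution. (iii) Repeat from that child, so that after at most $v$ rounds of this process we reach a vertex on which $P$ evaluates to ``true'', which we output. Each round makes $O(d)$ calls to the detection algorithm, each on a subtree of size $\le T$; since $d = O(1)$ (indeed, in our application the branching factor is small) this contributes only a constant factor, and the whole walk has length $\le v$. Setting the per-call failure probability to $\delta' = \delta/(\poly(v))$ via a union bound over the $O(v)$ detection calls, and noting $\log(1/\delta') = O(\log v + \log(1/\delta))$, gives a total of $O(\sqrt{T}\,v^{3/2}\log v\,\log(1/\delta))$ uses of $P$ and $h$, matching the claimed bound. (The $v^{3/2}\log v$ factor absorbs the $\poly(v)$ cost per detection call together with the $O(v)$ rounds of the search.)

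One technical point worth spelling out is that when we ``restrict the detection algorithm to a subtree'', we need a description of that subtree that is compatible with the predicate/heuristic oracle model: given the partial assignment $x$ labelling the chosen child, we simply run the original backtracking procedure but treat $x$ as the new root, i.e.\ we define $P_x(y) := P(x \cup y)$ and $h_x(y) := h(x \cup y)$ for partial assignments $y$ extending $x$. Both are computable with one call to $P$ or $h$ respectively, and the subtree explored from $x$ is a subtree of the original, hence of size $\le T$. So the query cost is preserved up to constants.

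The part I expect to require the most care is not the algorithmic idea but the bookkeeping: verifying that the detection subroutine of~\cite{montanaro2015} indeed succeeds with the stated resources on \emph{any} subtree (not just the full tree), that the union bound over the $O(v)$ adaptive detection calls is legitimate despite the calls being chosen adaptively (this is fine because we only need each \emph{individual} call to be correct, and there are at most $O(v \cdot d)$ of them regardless of the outcomes), and that the polynomial overhead in $v$ from both the detection subroutine and the depth of the descent combine to exactly $v^{3/2}\log v$. I would therefore present the descent procedure, state the detection primitive as a lemma quoted from~\cite{montanaro2015}, and then do the short error-and-complexity accounting to conclude.
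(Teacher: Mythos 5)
Your proposal is essentially the proof given in the source the paper cites for this theorem (the paper itself only quotes it): Montanaro likewise obtains the finding algorithm by running the quantum-walk-based detection subroutine on child subtrees and descending the tree one level at a time, taking a union bound over the $O(v)$ adaptive detection calls and amplifying each call's success probability accordingly. The only point to pin down is the one you flag yourself: the detection primitive must cost $O(\sqrt{Tv}\log(1/\delta))$ queries (phase estimation of the Belovs-style walk operator to precision $O(1/\sqrt{Tv})$, each step using $O(1)$ calls to $P$ and $h$), not merely $O(\sqrt{T}\poly(v)\log(1/\delta))$, since that specific bound is what makes the total come out to $O(\sqrt{T}v^{3/2}\log v\log(1/\delta))$.
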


Montanaro's result is based on a previous algorithm by Belovs \cite{belovs2013,belovs13a}, and works by performing a quantum walk on the backtracking tree to find vertices corresponding to assignments which satisfy the constraints. The reader familiar with \cite{montanaro2015} may note that the definition of the set of partial assignments $\mathcal{D}$ is different to that given there, in that it incorporates information about the ordering of assignments to variables. However, it is easy to see from inspection of the algorithm of \cite{montanaro2015} that this change does not affect the stated complexity of the algorithm.

It is worth noting that more standard quantum approaches such as amplitude amplification~\cite{brassard1997} will not necessarily achieve a quadratic speedup over the classical backtracking algorithm. Amplitude amplification requires access to a function $f:\{0,1\}^k \rightarrow \{\text{true}, \text{false}\}$ and a guessing function $\mathcal{G}$. If the probability of $\mathcal{G}$ finding a result $x \in \{0,1\}^k$ such that $f(x) = \text{true}$ is $p$, then amplitude amplification will succeed after $O(1/\sqrt{p})$ applications of $f$ and $\mathcal{G}$~\cite{brassard1997}.

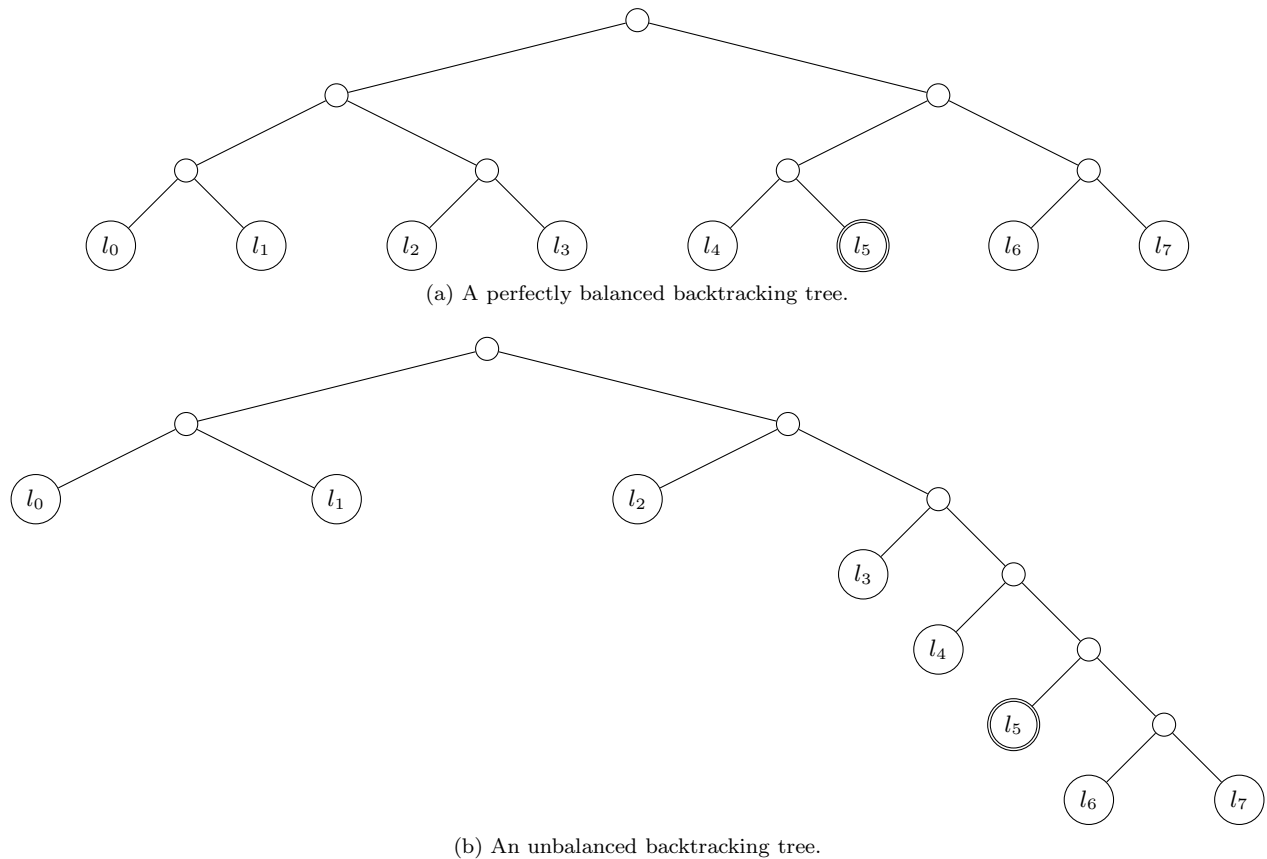
\begin{figure*}
\subfloat[A perfectly balanced backtracking tree.\label{fig:balanced-tree}]{
\begin{tikzpicture}
\tikzstyle{vertex}=[draw,shape=circle]
\path (0,0) node[vertex](b0){} (-4,-1) node[vertex](b1){} (4,-1) node[vertex](b2){} (-6,-2) node[vertex](b3){} (-2,-2) node[vertex](b4){} (2,-2) node[vertex](b5){} (6,-2) node[vertex](b6){} (-7,-3) node[vertex](b7){$l_0$} (-5,-3) node[vertex](b8){$l_1$} (-3,-3) node[vertex](b9){$l_2$} (-1,-3) node[vertex](b10){$l_3$} (1,-3) node[vertex](b11){$l_4$} (3,-3) node[vertex, accepting](b12){$l_5$} (5,-3) node[vertex](b13){$l_6$} (7,-3) node[vertex](b14){$l_7$};
\draw (b7) -- (b3) -- (b1) -- (b0) -- (b2) -- (b6) -- (b14);
\draw (b8) -- (b3);
\draw (b9) -- (b4) -- (b1);
\draw (b10) -- (b4);
\draw (b11) -- (b5) -- (b2);
\draw (b12) -- (b5);
\draw (b13) -- (b6);
\end{tikzpicture}
}
\hfill
\subfloat[An unbalanced backtracking tree.\label{fig:unbalanced-tree}]{
\begin{tikzpicture}
\tikzstyle{vertex}=[draw,shape=circle]
\path (0,0) node[vertex](b0){} (-4,-1) node[vertex](b1){} (4,-1) node[vertex](b2){} (-6,-2) node[vertex](b3){$l_0$} (-2,-2) node[vertex](b4){$l_1$} (2,-2) node[vertex](b5){$l_2$} (6,-2) node[vertex](b6){} (5,-3) node[vertex](b7){$l_3$} (7,-3) node[vertex](b8){} (6,-4) node[vertex](b9){$l_4$} (8,-4) node[vertex](b10){} (7,-5) node[vertex, accepting](b11){$l_5$} (9,-5) node[vertex](b12){} (8,-6) node[vertex](b13){$l_6$} (10,-6) node[vertex](b14){$l_7$};
\draw (b3) -- (b1) -- (b0) -- (b2) -- (b6) -- (b8) -- (b10) -- (b12) -- (b14);
\draw (b4) -- (b1);
\draw (b5) -- (b2);
\draw (b7) -- (b6);
\draw (b9) -- (b8);
\draw (b11) -- (b10);
\draw (b13) -- (b12);
\end{tikzpicture}
}
\caption{Example backtracking trees, where $l_5$ is a leaf corresponding to a solution to a constraint satisfaction problem. In the perfectly balanced case of Fig.\ \ref{fig:balanced-tree}, each leaf can be associated with a 3-bit string corresponding to a path to that leaf. But in the unbalanced case of Fig.\ \ref{fig:unbalanced-tree}, specifying a path to a leaf requires 6 bits.}
\label{fig:tree}
\end{figure*}

To apply amplitude amplification, we would need to access the leaves of the tree, as these are the points where the backtracking algorithm is certain whether or not a solution will be found. Thus, for each integer $i$, we would need to find a way of determining the $i$'th leaf $l_i$ in the backtracking tree. In the case of a perfectly balanced tree, such as Fig.\ \ref{fig:balanced-tree}, where every vertex in the tree is either a leaf or has exactly $d$ branches descending from it, such a problem is easy: write $i$ in base $d$ and use each digit of $i$ to decide which branch to explore. But not all backtracking trees are perfectly balanced, such as in Fig.\ \ref{fig:unbalanced-tree}. In these cases, finding leaf $l_i$ is hard as we cannot be certain which branch leads to that leaf. Some heuristic approaches, by performing amplitude amplification on part of the tree, can produce better speedups for certain trees, but do not provide a general speedup on the same level as the quantum backtracking algorithm~\cite{montanaro2015}.

It is also worth understanding the limitations of the quantum backtracking algorithm, and why it cannot necessarily speed up all algorithms termed ``backtracking algorithms''~\cite{montanaro2015}. First, a requirement for the quantum algorithm is that decisions made in one part of the backtracking tree are independent of results in another part of the tree, which is not true of all classical algorithms, such as constraint recording algorithms \cite{dechter1990}. Second, the runtime of the quantum algorithm depends on the size of the entire tree. Thus, to achieve a quadratic speedup over a classical algorithm, the algorithm must explore the whole backtracking tree, instead of stopping after finding the first solution or intelligently skipping branches such as in backjumping \cite{dechter1990}. Therefore, it is important to check on a case-by-case basis whether classical backtracking algorithms can actually be accelerated using Theorem \ref{thm:backtrack}.

Another limitation of the quantum backtracking algorithm is that often there will be a metric $M:\mathcal{D} \rightarrow \mathbb{N}$ we want the backtracking algorithm to minimise while satisfying the other constraints. This is particularly relevant for the TSP, where the aim is to return the shortest Hamiltonian cycle. Classical backtracking algorithms can achieve this by recursively travelling down each branch of the tree to find results $D_1,\dots,D_d \in \mathcal{D}$ and returning the result that minimises $M$. The quantum backtracking algorithm cannot perform this; it instead returns a solution selected randomly from the tree that satisfies the constraints. In order to achieve a quantum speedup when finding the result that minimises $M$, we can modify the original predicate to prune results which are greater than or equal to a given bound. We then repeat the algorithm in a binary search fashion, updating our bound based on whether or not a solution was found. This will find the minimum after repeating the quantum algorithm at most $O(\log M_{max})$ times, where 
\[M_{max} = \max\{M(D):D\in \mathcal{D}, P(D) = \text{true}\}.\]
We describe this binary search approach in more detail in Sec.\ \ref{sec:deg3speedup}.

The intuition behind why backtracking is a useful technique for solving the TSP is that we can attempt to build up a Hamiltonian cycle by determining for each edge in the graph whether it should be included in the cycle (``forced''), or deleted from the graph. As we add more edges to the cycle, we may either find a contradiction (e.g.\ produce a non-Hamiltonian cycle) or reduce the graph to a special case that can be handled efficiently (e.g.\ a collection of disjoint cycles of four unforced edges). This can sometimes allow us to prune the backtracking tree substantially.

To analyse the performance of backtracking algorithms for the TSP, a problem size measure is often defined that is at least 0 and at most $n$ (e.g.\ the number of vertices minus the number of forced edges). Note that if there are more than $n$ forced edges then it is impossible to form a Hamiltonian cycle that includes every forced edge, so the number of forced edges is at most $n$. At the start of the backtracking algorithm, there are no forced edges so the problem size is $n$. Each step of the backtracking algorithm reduces the problem size until the size is $0$, at which point either the $n$ forced edges form a Hamiltonian cycle or a Hamiltonian cycle that includes every forced edge cannot be found. A quasiconvex program can be developed based on how the backtracking algorithm reduces the problem size. Solving this quasiconvex problem produces a runtime in terms of the problem size, which can be re-written in terms of $n$ due to the problem size being at most $n$.

It was proposed by D\"orn \cite{dorn2007} that amplitude amplification could be applied to speed up the runtime of Eppstein's algorithm for the TSP on degree-3 graphs~\cite{eppstein2007} from $O^*(2^{n/3})$ to $O^*(2^{n/6})$. Amplitude amplification can be used in this setting by associating a bit-string with each sequence of choices of whether to force or delete an edge, and searching over bit-strings to find the shortest valid Hamiltonian cycle. However, as suggested by the general discussion above, a difficulty with this approach is that some branches of the recursion, as shown in Fig.~\ref{fig:size-decrease-by-two}, only reduce the problem size by 2 (as measured by the number of vertices $n$, minus the number of forced edges). The longest branch of the recursion can, as a result, be more than $n/3$ levels deep. In the worst case, this depth could be as large as $n/2$ levels. Specifying the input to the checking function $f$ could then require up to $n/2$ bits, giving a search space of size $O(2^{n/2})$. Under these conditions, searching for the solution via amplitude amplification could require up to $O^*(2^{n/4})$ time in the worst case. To yield a better runtime, we must take more of an advantage of the structure of our search space to avoid instances which will definitely not succeed.

The same issue with amplitude amplification applies to other classical algorithms for the TSP which are based on backtracking~\cite{xiao2016degree3,xiao2016degree4}. In the case of the Xiao-Nagamochi algorithm for degree-3 graphs, although the overall runtime bound proven for the problem means that the number of vertices in the tree is $O(2^{3n/10})$, several of the branching vectors used in their analysis have branches that reduce the problem size by less than $10/3$, leading to a branch in the tree that could be more than $3n/10$ levels deep.

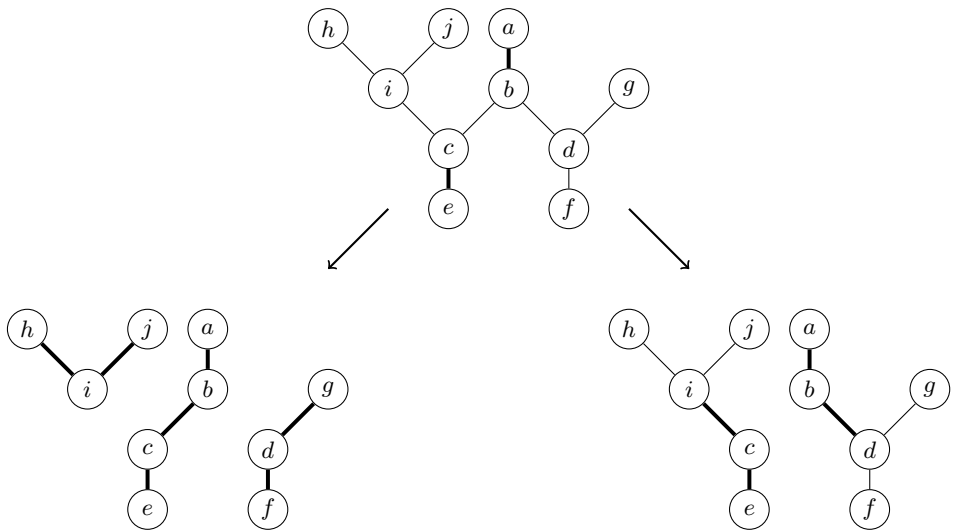
\begin{figure*}
\begin{center}
\begin{tikzpicture}[scale=0.8]
\tikzstyle{vertex}=[draw,shape=circle,inner sep=0pt,minimum size=15pt]
\path (0,0) node[vertex](f0){$a$};
\path (-2,-1) node[vertex](x0){$i$} (-1,-2) node[vertex](x1){$c$} (0,-1) node[vertex](f1){$b$} (1,-2) node[vertex](y1){$d$} (2,-1) node[vertex](y0){$g$};
\path (-1,-3) node[vertex](x2){$e$} (1,-3) node[vertex](y2){$f$};
\path (-3,0) node[vertex](x3){$h$} (-1,0) node[vertex](x4){$j$};
\draw[line width=1.5pt] (f0) -- (f1);
\draw[line width=1.5pt] (x2) -- (x1);
\draw (x0) -- (x1) -- (f1) -- (y1) -- (y2);
\draw (x3) -- (x0) -- (x4);
\draw (y0) -- (y1);

\path (-5,-5) node[vertex](f0){$a$};
\path (-7,-6) node[vertex](x0){$i$} (-6,-7) node[vertex](x1){$c$} (-5,-6) node[vertex](f1){$b$} (-4,-7) node[vertex](y1){$d$} (-3,-6) node[vertex](y0){$g$};
\path (-6,-8) node[vertex](x2){$e$} (-4,-8) node[vertex](y2){$f$};
\path (-8,-5) node[vertex](x3){$h$} (-6,-5) node[vertex](x4){$j$};
\draw[line width=1.5pt] (f0) -- (f1) -- (x1) -- (x2);
\draw[line width=1.5pt] (x3) -- (x0) -- (x4);
\draw[line width=1.5pt] (y2) -- (y1) -- (y0);

\path (5,-5) node[vertex](f0){$a$};
\path (3,-6) node[vertex](x0){$i$} (4,-7) node[vertex](x1){$c$} (5,-6) node[vertex](f1){$b$} (6,-7) node[vertex](y1){$d$} (7,-6) node[vertex](y0){$g$};
\path (4,-8) node[vertex](x2){$e$} (6,-8) node[vertex](y2){$f$};
\path (2,-5) node[vertex](x3){$h$} (4,-5) node[vertex](x4){$j$};
\draw[line width=1.5pt] (f0) -- (f1) -- (y1);
\draw[line width=1.5pt] (x0) -- (x1) -- (x2);
\draw (x3) -- (x0) -- (x4);
\draw (y0) -- (y1) -- (y2);

\draw[thick, ->] (-2,-3) -- (-3,-4);
\draw[thick, ->] (2,-3) -- (3,-4);
\end{tikzpicture}
\end{center}
\caption{An instance of the recursive step in Eppstein's backtracking algorithm for the TSP~\cite{eppstein2007} for a subgraph of a larger graph $G$, with forced edges displayed in bold and  branching on edge $bc$. If we force $bc$, then $b$ and $c$ are both incident to two forced edges, so $bd$ and $ci$ cannot be part of the Hamiltonian cycle and can be removed from the graph. After these edges are removed, vertices $i$ and $d$ are both of degree $2$, so in order to reach those vertices the edges $hi$, $ij$, $df$ and $dg$ must also be included in the Hamiltonian cycle. So forcing $bc$ has overall added five edges to the Hamiltonian cycle. On the other hand, if we remove edge $bc$, we find that $b$ and $c$ are vertices of degree $2$, so edges $bd$ and $ci$ must be part of the Hamiltonian cycle. Thus we have only added two more edges to the Hamiltonian cycle.
\label{fig:size-decrease-by-two}}
\end{figure*}


\section{Quantum speedups for the Travelling Salesman Problem on bounded-degree graphs \label{sec:bd}}
\label{sec:deg3}

Our algorithms are based on applying the quantum algorithm for backtracking (Theorem \ref{thm:backtrack}) to Xiao and Nagamochi's algorithm~\cite{xiao2016degree3}. Before describing our algorithms, we need to introduce some terminology from~\cite{xiao2016degree3} and describe their original algorithm. The algorithm, and its analysis, are somewhat involved, so we omit details wherever possible.

\subsection{The algorithm of Xiao and Nagamochi}
\label{sec:xndeg3}

A graph $G$ is $k$-edge connected if there are $k$ edge-disjoint paths between every pair of vertices. An edge in $G$ is said to be forced if it must be included in the final tour, and unforced otherwise. The set of forced edges is denoted $F$, and the set of unforced edges is denoted $U$. An induced subgraph of unforced edges which is maximal and connected is called a $U$-component. If a $U$-component is just a single vertex, then that $U$-component is trivial. A maximal sequence $\mathcal{C}$ of edges in a $U$-component $H$ is called a circuit if either:
\begin{itemize}
\item $\mathcal{C} = \{xy\}$ and there are three edge-disjoint paths from $x$ to $y$,
\item or $\mathcal{C} = \{c_0, c_1,\dots,c_{m-1}\}$ such that for $0 \leq i < m-1$, there is a subgraph $B_i$ of $H$ such that the only two unforced edges incident to $B_i$ are $c_i$ and $c_{i+1}$.
\end{itemize}

A circuit is reducible if subgraph $B_i$ for some $i$ is incident to only two edges. In order for $B_i$ to be reached, both edges incident to $B_i$ need to be forced. Forcing one edge in the circuit then means that the other edges can be either forced or removed. The polynomial time and space process by Xiao and Nagamochi to reduce circuits, by forcing and removing alternating edges in the circuit, is known as the {\em circuit procedure} \cite{xiao2016degree3}.

Note that each edge can be in at most one circuit. If two distinct circuits $\mathcal{C}, \mathcal{C}'$ shared an edge $e_i$, then there are two possibilities. The first is that there is a subgraph $B_i$ incident to unforced edges $e_i \in \mathcal{C} \cap \mathcal{C}', e_{i+1} \in \mathcal{C} - \mathcal{C}', e_j \in \mathcal{C}' - \mathcal{C}$. In this case, $B_i$ is incident to more than two unforced edges, so neither $\mathcal{C}$ nor $\mathcal{C}'$ are circuits, which is a contradiction.

The second is that there is some edge $e_i$ which is incident to distinct subgraphs $B_i, B_i'$ related to $\mathcal{C}, \mathcal{C}'$, respectively. Circuits are maximal sequences, so it cannot be the case that $B_i$ is a subgraph of $B_i'$, otherwise $\mathcal{C}' \subseteq \mathcal{C}$. Now we consider the subgraphs $B_i \cap B_i'$ and $B_i - B_i'$, which must be connected by unforced edges as they are both subgraphs of $B_i$. These unforced edges are incident to $B_i'$, which is a contradiction as they are not part of $\mathcal{C}'$.

Let $X$ be a subgraph. We define $\text{cut}(X)$ to be the set of edges that connect $X$ to the rest of the graph. If $|\text{cut}(X)| = 3$, then we say that $X$ is $3$-cut reducible. It was shown by Xiao and Nagamochi~\cite{xiao2016degree3} that, if $X$ is 3-cut reducible, $X$ can be replaced with a single vertex of degree $3$ with outgoing edges weighted such that the length of the shortest Hamiltonian cycle is preserved.

The definition of $4$-cut reducible is more complex. Let $X$ be a subgraph such that $\text{cut}(X) \subseteq F$ and $|\text{cut}(X)| = 4$. A solution to the TSP would have to partition $X$ into two disjoint paths such that every vertex in $X$ is in one of the two paths. If $x_1, x_2, x_3$ and $x_4$ are the four vertices in $X$ incident to the four edges in $\text{cut}(X)$, then there are three ways these paths could start and end:
\begin{itemize}
\item $x_1 \leftrightarrow x_2$ and $x_3 \leftrightarrow x_4$,
\item $x_1 \leftrightarrow x_3$ and $x_2 \leftrightarrow x_4$,
\item or $x_1 \leftrightarrow x_4$ and $x_2 \leftrightarrow x_3$.
\end{itemize}
We say that $X$ is $4$-cut reducible if for at least one of the above cases it is impossible to create two disjoint paths in $X$ that include all vertices in $X$. Xiao and Nagamochi defined a polynomial time and space process for applying the above reductions, known as {\em $3/4$-cut reduction}~\cite{xiao2016degree3}.

A set of edges $\{e_i\}$ are {\em parallel} if they are incident to the same vertices (note that here we implicitly let $G$ be a multigraph; these may be produced in intermediate steps of the algorithm). If there are only two vertices in the graph, then the TSP can be solved directly by forcing the shortest two edges. Otherwise if at least one of the edges is not forced, then we can reduce the problem by removing the longer unforced edges until the vertices are only adjacent via one edge. This is the process Xiao and Nagamochi refer to as {\em eliminating parallel edges} \cite{xiao2016degree3}.

Finally, a graph is said to satisfy the parity condition if every $U$-component is incident to an even number of forced edges and for every circuit $\mathcal{C}$, an even number of the corresponding subgraphs $B_i$ satisfy that $|\text{cut}(B_i) \cap F|$ is odd.




We are now ready to describe Xiao and Nagamochi's algorithm. The algorithm takes as input a graph $G = (V, E)$ and a set of forced edges $F \subseteq E$ and returns the length of the shortest Hamiltonian cycle in $G$ containing all the edges in $F$, if one exists.

The algorithm is based on four subroutines: {\em eliminating parallel edges}, the {\em 3/4-cut reduction}, {\em selecting a good circuit} and the {\em circuit procedure}, as well as the following lemma:

\begin{lemma}[Eppstein~\cite{eppstein2007}]
\label{lem:trivial}
If every $U$-component in a graph $G$ is trivial or a component of a 4-cycle, then a minimum cost tour can be found in polynomial time.
\end{lemma}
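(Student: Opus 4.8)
The plan is to reduce the general problem to a polynomial number of independent subproblems, each of which is a weighted matching-type problem that can be solved in polynomial time. Suppose every $U$-component of $G$ is either a trivial $U$-component (a single vertex) or is a $4$-cycle of unforced edges. First I would observe that the forced edges $F$ already impose a rigid local structure: since $G$ is bounded-degree and a Hamiltonian cycle uses exactly two edges at each vertex, any vertex incident to two forced edges has its remaining incident edges effectively deleted. After this clean-up, every vertex is incident to at most two forced edges, and the only freedom left is in how the tour passes through the $4$-cycle $U$-components.

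Next I would analyse a single $4$-cycle $U$-component $H$ on vertices $w_1w_2w_3w_4$ with unforced edges $w_1w_2, w_2w_3, w_3w_4, w_4w_1$. Because the graph has bounded degree and $H$ is a maximal connected subgraph of unforced edges, each $w_i$ is incident to exactly one forced edge leaving $H$ (if it were incident to two forced edges, both of its unforced edges would have been deleted in the clean-up, contradicting that $H$ is a $4$-cycle of unforced edges; if it were incident to zero forced edges it could not be completed to a Hamiltonian cycle given its degree). A Hamiltonian cycle must select a subset of the four unforced edges of $H$ so that each $w_i$ gets exactly one more incident edge, i.e. a perfect matching on the $4$-cycle. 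A $4$-cycle has exactly two perfect matchings, $\{w_1w_2,w_3w_4\}$ and $\{w_2w_3,w_4w_1\}$, so there are only two local choices per component, and the cheaper one is determined by comparing the two sums of edge costs — but we cannot simply take the cheaper one independently, because the choice affects how the forced-edge "chains" link up into cycles, and we must forbid short subtours.

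Here is the main obstacle, and how I would handle it: after fixing the clean-up, contract every maximal path of forced edges to a single super-edge; what remains is a graph in which each $4$-cycle component must be resolved by one of its two matchings, and the union of all choices together with the forced super-edges must form a single Hamiltonian cycle rather than a disjoint union of cycles. I would argue that this residual problem has a special structure — effectively the forced edges partition the vertex set into paths whose endpoints are the $w_i$'s, and choosing matchings in the $4$-cycles amounts to pairing up these path-endpoints — so that the whole thing reduces to finding a minimum-weight perfect matching (or a spanning structure solvable by matching/$T$-join techniques) on an auxiliary graph of polynomial size, with the single-cycle constraint enforced by the standard trick of working in the appropriate matroid/parity setting. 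Since minimum-weight perfect matching is polynomial-time solvable, this gives the claimed bound. I expect the delicate point to be verifying that the single-Hamiltonian-cycle (no-subtour) requirement really is captured by a tractable problem and does not secretly reintroduce NP-hardness; the bounded-degree hypothesis and the very restricted ($4$-cycle) structure of the $U$-components are what make this work, and I would lean on Eppstein's original argument \cite{eppstein2007} for the clean version of this reduction rather than reconstructing every case.
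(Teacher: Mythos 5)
This lemma is not proved in the paper at all -- it is imported verbatim from Eppstein's paper as a black box, so the relevant comparison is with Eppstein's original argument. Your local analysis matches his setup: in the degree-3 setting each vertex of a 4-cycle $U$-component has exactly one forced edge, so a tour must pick up exactly one of the two perfect matchings of each 4-cycle, and the only remaining difficulty is the global ``single cycle, no subtours'' constraint. But that is precisely the step your proposal does not actually supply. You assert that the no-subtour requirement can be ``enforced by the standard trick of working in the appropriate matroid/parity setting'' on top of a minimum-weight perfect matching, and you explicitly defer to Eppstein for the clean reduction. There is no such standard trick: connectivity is exactly what minimum-cost perfect matching / 2-factor / $T$-join formulations fail to capture (a cheapest choice of matchings per 4-cycle together with $F$ is just a minimum-cost 2-factor containing $F$, which may well split into several subtours), so as written the central claim of your reduction is unsubstantiated.

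Eppstein's actual mechanism is different and is worth knowing: it is a reduction to \emph{minimum spanning tree}, not matching. Choose in every 4-cycle the cheaper of its two matchings; together with $F$ this gives a 2-factor, i.e.\ a disjoint union of cycles. Flipping the choice in a 4-cycle whose two chosen edges lie on two different cycles merges those cycles (a 2-opt-style exchange), at an extra cost equal to the difference of the two matching weights, which is nonnegative by the initial choice. Build an auxiliary graph whose vertices are the cycles of the initial 2-factor and whose edges are the 4-cycles joining two distinct such cycles, weighted by these flip costs; flipping the 4-cycles of a spanning tree (processed in tree order, so each flip merges two currently distinct cycles) yields a Hamiltonian cycle, and conversely any tour containing $F$ must flip a set of 4-cycles that connects this auxiliary graph, so its cost is at least the base cost plus the MST weight. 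Hence the optimum is the base cost plus a minimum spanning tree, computable in polynomial time. Your write-up identifies the two-choices-per-4-cycle structure correctly, but without this switching/MST argument (or some equally concrete substitute) the proof has a genuine gap. A smaller quibble: your claim that a 4-cycle vertex with no incident forced edge ``could not be completed to a Hamiltonian cycle'' is false in isolation (a degree-2 vertex simply has both its edges used); such cases are excluded only because earlier reduction steps of the algorithm have already handled them.
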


We will not define the subroutines here in any detail; for our purposes, it is sufficient to assume that they all run in polynomial time and space. The circuit procedure for a circuit $\mathcal{C}$ begins by either adding an edge $e \in \mathcal{C}$ to $F$ or deleting it from the graph, then performing some other operations. ``Branching on a circuit $\mathcal{C}$ at edge $e \in \mathcal{C}$'' means generating two new instances from the current instance by applying each of these two variants of the circuit procedure starting with $e$.

The Xiao-Nagamochi algorithm, named $\text{TSP3}$, proceeds as follows, reproduced from~\cite{xiao2016degree3}:

\begin{enumerate}
\item {\bf If} $G$ is not $2$-edge-connected or the instance violates the parity condition, then return $\infty$;
\item {\bf Elseif} there is a reducible circuit $\mathcal{C}$, then return $\text{TSP3}(G', F')$ for an instance $(G',F')$ obtained by applying the circuit procedure on $\mathcal{C}$ started by adding a reducible edge in $\mathcal{C}$ to $F$;
\item {\bf Elseif} there is a pair of parallel edges, then return $\text{TSP3}(G',F')$ for an instance $(G',F')$ obtained by applying the reduction rule of eliminating parallel edges;
\item {\bf Elseif} there is a $3/4$-cut reducible subgraph $X$ containing at most eight vertices, then return $\text{TSP3}(G',F')$ for an instance $(G',F')$ obtained by applying the $3/4$-cut reduction on $X$;
\item {\bf Elseif} there is a $U$-component $H$ that is neither trivial nor a $4$-cycle, then select a good circuit $\mathcal{C}$ in $H$ and return $\min\{\text{TSP3}(G_1,F_1), \text{TSP3}(G_2,F_2)\}$, where $(G_1,F_1)$ and $(G_2,F_2)$ are the
two resulting instances after branching on $\mathcal{C}$;
\item {\bf Else} [each $U$-component of the graph is trivial or a $4$-cycle], solve the problem directly in polynomial time by Lemma \ref{lem:trivial} and return the cost of an optimal tour.
\end{enumerate}

Step $1$ of the algorithm checks that the existence of a Hamiltonian cycle is not ruled out, by ensuring that that there are at least two disjoint paths between any pair of vertices and that the graph satisfies the parity condition. Step 2 reduces any reducible circuit by initially forcing one edge and then alternately removing and forcing edges. Step $3$ removes any parallel edges from the graph, and step $4$ removes any circuits of three edges as well as setting up circuits of four edges so that all edges incident to them are forced. Step 5 is the recursive step, branching on a good circuit by either forcing or removing an edge in the circuit and then applying the circuit procedure. The algorithm continues these recursive calls until it either finds a Hamiltonian cycle or $G \setminus F$ is a collection of single vertices and cycles of length $4$, all of which are disjoint from one another, at which point the problem can be solved in polynomial time via step $6$.

Xiao and Nagamochi looked at how the steps of the algorithm, and step $5$ in particular as the branching step, reduced the size of the problem for different graph structures. From this they derived a quasiconvex program corresponding to $19$ branching vectors, each describing how the problem size is reduced at the branching step in different circumstances. Analysis of this quasiconvex program showed that the algorithm runs in $O^*(2^{3n/10})$ time and polynomial space \cite{xiao2016degree3}.

\subsection{Quantum speedup of the Xiao-Nagamochi algorithm}
\label{sec:deg3speedup}

Here we describe how we apply the quantum backtracking algorithm to the Xiao-Nagamochi algorithm. It is worth noting that the quantum backtracking algorithm will not necessarily return the shortest Hamiltonian cycle, but instead returns a randomly selected Hamiltonian cycle that it found. Adding constraints on the length of the Hamiltonian cycles to our predicate and running the quantum backtracking algorithm multiple times will allow us to find a solution to the TSP.

The first step towards applying the quantum backtracking algorithm is to define the set of partial assignments. A partial assignment will be a list of edges in $G$ ordered by when they are assigned in the backtracking algorithm and paired with whether the assignment was to force or remove the edge. The assignment is denoted $A \in (\{1,\dots,m\}, \{\text{force}, \text{remove}\})^j$, where $j \leq m$. We have $m \le 3n/2$ as $G$ is degree-3.


The quantum approach to backtracking requires us to define a predicate $P$ and heuristic $h$, each taking as input a partial assignment. Our predicate and heuristic make use of a reduction function, introduced in \cite{xiao2016degree3}, as a subroutine; this function is described in the next subsection. However it may be worth noting that the algorithm uses the original graph $G$, and partial assignments of it at each stage.

Firstly, we describe the $P$ function, which takes a partial assignment $A = ((e_1, A_1),\dots,(e_j, A_j))$ as input:

\begin{enumerate}
\item Using the partial assignment $A$, apply the reduction function to $(G, F)$ to get $(G', F')$.
\item If $G'$ is not $2$-edge-connected or fails the parity condition, then return false.
\item If every $U$-component in $G'$ is either trivial or a $4$-cycle, then return true.
\item Return indeterminate.
\end{enumerate}

Step $2$ matches step $1$ of Xiao and Nagamochi's algorithm. Step $3$ is where the same conditions are met as in step $6$ of Xiao and Nagamochi's algorithm, where a shortest length Hamiltonian cycle is guaranteed to exist and can be found in polynomial time classically via Lemma \ref{lem:trivial}. Step $4$ continues the branching process, which together with how the circuit is picked by $h$ and step $2$(c) of the reduction function (qv) matches step $5$ of Xiao and Nagamochi.

The $h$ function is described as follows, taking as input a partial assignment $A = ((e_1, A_1),\dots,(e_j, A_j))$ of the edges of $G$:

\begin{enumerate}
\item Using the partial assignment $A$, apply the reduction function to $(G, F)$ to get $(G', F')$.
\item Select a $U$-component in $G'$ that is neither trivial nor a cycle of length $4$. Select a circuit $\mathcal{C}$ in that component that fits the criteria of a ``good'' circuit~\cite{xiao2016degree3}, then select an edge $e_i' \in \mathcal{C}$.
\item Return an edge in $G$ corresponding to $e_i'$ (if there is more than one, choosing one arbitrarily).
\end{enumerate}

Step $2$ applies step $5$ of Xiao and Nagamochi's algorithm, by selecting the next circuit to branch on and picking an edge in that circuit. If the reduced version of the graph results in $h$ picking an edge corresponding to multiple edges in the original graph, step $3$ ensures that we only return one of these edges to the backtracking algorithm, as step $2$(b) of the reduction function will ensure that every edge in the original graph corresponding to an edge in the reduced graph will be consistently forced or removed. The rest of the circuit will be forced or removed by step $2$(c) of the reduction function.

We can now apply the backtracking algorithm (Theorem \ref{thm:backtrack}) to $P$ and $h$ to find a Hamiltonian cycle. We will later choose its failure probability $\delta$ to be sufficiently small that we can assume that it always succeeds, i.e.\ finds a Hamiltonian cycle if one exists, and otherwise reports that one does not exist. At the end of the algorithm, we will receive either the information that no assignment was found, or a partial assignment. By applying the reduction steps and the partial assignments, we can reconstruct the graph at the moment our quantum algorithm terminated, which will give a graph such that every $U$-component is either trivial or a 4-cycle. We then construct and return the full Hamiltonian cycle in polynomial time using step $6$ of Xiao and Nagamochi's algorithm~\cite{xiao2016degree3}. 

To solve the TSP, we need to find the shortest Hamiltonian cycle. This can be done as follows. First, we run the backtracking algorithm. If the backtracking algorithm does not return a Hamiltonian cycle then we report that no Hamiltonian cycle was found. Otherwise after receiving Hamiltonian cycle $\Gamma$ with length $L_\Gamma$, we create variables $\ell \leftarrow 0$ \& $u \leftarrow L_\Gamma$ and modify $P$ to return false if
\[\sum_{e_{i,j}\in F}c_{ij} \geq \lceil(\ell + u)/2\rceil.\]
If no cycle is found after running the algorithm again, we set $\ell \leftarrow \lceil(\ell + u)/2\rceil$ and repeat. Otherwise, upon receiving Hamiltonian cycle $\Gamma'$ with total cost $L_{\Gamma'}$, we set $u \leftarrow L_{\Gamma'}$ and repeat. We continue repeating until $\ell$ and $u$ converge, at which point we return the Hamiltonian cycle found by the algorithm. In the worst case scenario, where the shortest cycle is found during the first run of the backtracking algorithm, this algorithm matches a binary search. So the number of repetitions of the backtracking algorithm required to return the shortest Hamiltonian cycle is at most $O(\log L')$, where
%
%
\begin{align}
L' = \sum_{i = 1}^{n}\max \{c_{ij} : j \in \{1,\dots,n\} \}
\label{eqn:l}
\end{align}
is an upper bound on the total cost of any Hamiltonian cycle in the graph.

\subsection{The reduction function}
\label{sec:reduction}

Finally, we describe the reduction function, which takes the original graph $G$ and partial assignment $A$, and applies the partial assignment to this graph in order to reduce it to a smaller graph $G'$ with forced edges $F'$. This reduction might mean that forcing or removing a single edge in $G'$ would be akin to forcing several edges in $G$. For example, let $X$ be a $3$-reducible subgraph of at most $8$ vertices with $\text{cut}(X) = \{ax_1, bx_2, cx_3\}$ for vertices $x_1, x_2, x_3 \in V(X)$. The $3/4$-cut reduction reduces $X$ to a single vertex $x \in G'$ with edges $ax, bx, cx$. If the edges $ax$ and $bx$ are forced, this is equivalent to forcing every edge in $\Pi \cup \{ax_1, bx_2\}$, where $\Pi$ is the shortest path that starts at $x_1$, visits every vertex in $X$ exactly once, and ends at $x_2$. As we need to solve the problem in terms of the overall graph $G$ and not the reduced graph $G'$, our assigned variables need to correspond to edges in $G$. To do this, our heuristic includes a step where if the edge selected in $G'$ corresponds to multiple edges in $G$, we simply select one of the corresponding edges in $G$ to return. Likewise, if the next edge in our partial assignment is one of several edges in $G$ corresponding to a single edge in $G'$, we apply the same assignment to all of the other corresponding edges in $G$.

The reduction function works as follows, using reductions and procedures from Xiao and Nagamochi \cite{xiao2016degree3}:

\begin{enumerate}
\item Create a copy of the graph $G' \leftarrow G$ and set of forced edges $F' \leftarrow \emptyset$.
\item For each $i=1,\dots,j$:
\begin{enumerate}
\item Repeat until none of the cases apply:
\begin{enumerate}
\item If $G'$ contains a reducible circuit $\mathcal{C}$, then apply the circuit procedure to $\mathcal{C}$.
\item If $G'$ contains parallel edges, then apply the reduction rule of eliminating parallel edges.
\item If $G'$ contains a subgraph $X$ of at most $8$ vertices such that $X$ is $3/4$-cut reducible, then apply the $3$/$4$-cut reduction to $X$.
\end{enumerate}
\item Apply assignment $(e_i, a_i)$ to $(G', F')$ by adding edge $e_i$ to $F'$ if $a_i = \text{force}$, or deleting edge $e_i$ from $G'$ if $A_i = \text{remove}$. If edge $e_i$ is part of a set of edges corresponding to a single edge in $G'$, apply the same assignment to all edges in $G$ which correspond to the same edge in $G'$ by adding them all to $F'$ if $a_i = \text{force}$, or deleting them all from $G'$ if $a_i = \text{remove}$.
\item Apply the circuit procedure to the rest of the circuit containing edge $e_i$.
\end{enumerate}
\item Run step 2(a) again.
\item Return $(G', F')$.
\end{enumerate}

Step $2$(a)i recreates step $2$ from Xiao and Nagamochi's original algorithm by applying the circuit procedure where possible. Step $2$(a)ii recreates step $3$ of the original algorithm by applying the reduction of parallel edges. Step $2$(a)iii recreates step $4$ of the original algorithm via the $3/4$-cut reduction. Step $2$(b) applies the next step of the branching that has been performed so far, to ensure that the order in which the edges are forced is the same as in the classical algorithm. Step $2$(c) corresponds to branching on a circuit at edge $e_i$. Finally, step $3$ checks whether or not the graph can be reduced further by running the reduction steps again.

One might ask if an edge could be part of two circuits, in which case our algorithm would fail as it would not be able to reduce the circuit. However, as discussed in Sec.\ \ref{sec:xndeg3}, any edge can only be part of at most one circuit.

\subsection{Analysis}

Steps $2$(a)i-iii of the reduction algorithm can be completed in polynomial time~\cite{xiao2016degree3}. All of these steps also reduce the size of a problem by at least a constant amount, so only a polynomial number of these steps are needed. Step 2(b) is constant time and step 2(c) can be run in polynomial time as the circuit is now reducible. All steps are only repeated $O(m)$ times, so the whole reduction algorithm runs in polynomial time in terms of $m$.

Steps $2$ and $3$ of the $h$ subroutine run in polynomial time as searching for a good circuit in a component can be done in polynomial time \cite{xiao2016degree3}. Likewise, steps 2 and 3 of the $P$ function involve looking for certain structures in the graph that can be found in polynomial time. As a result, the runtimes for the $P$ and $h$ functions are both polynomial in $m$.

By Theorem \ref{thm:backtrack}, the number of calls to $P$ and $h$ we make in order to find a Hamiltonian cycle with failure probability $\delta$ is $O(\sqrt{T}\poly(m)\log (1/\delta))$, where $T$ is the size of the backtracking tree, which in our case is equal to the number of times the Xiao-Nagamochi algorithm branches on a circuit. $P$ and $h$ both run in polynomial time and as a result can be included in the $\poly(m)$ term of the runtime. Because $m \leq 3n/2$, the polynomial term in this bound is also polynomial in terms of $n$.

The behaviour of the $P$ and $h$ subroutines is designed to reproduce the behaviour of Xiao and Nagamochi's TSP3 algorithm~\cite{xiao2016degree3}. It is shown in~\cite[Theorem 1]{xiao2016degree3} that this algorithm is correct, runs in time $O^*(2^{3n/10})$ and uses polynomial space. As the runtime of the TSP3 algorithm is an upper bound on the number of branching steps it makes, the algorithm branches on a circuit $O^*(2^{3n/10})$ times. Therefore, the quantum backtracking algorithm finds a Hamiltonian cycle, if one exists, with failure probability at most $\delta$ in time $O^*(2^{3n/20} \log(1/\delta)) \approx O^*(1.110^n \log(1/\delta))$ and polynomial space.

Finding the shortest Hamiltonian cycle requires repeating the algorithm $O(\log L')$ times, where $L'$ is given in Equation \ref{eqn:l}. By using a union bound over all the runs of the algorithm, to ensure that all runs succeed with high probability it is sufficient for the failure probability $\delta$ of each run to be at most $O(1/(\log L'))$. From this we obtain the following result, proving the first part of Theorem \ref{thm:deg34}:

\begin{theorem}
There is a bounded-error quantum algorithm which solves the TSP on degree-3 graphs in time $O^*(1.110^n \log L \log \log L)$, where $L$ is the maximum edge cost. The algorithm uses $\poly(n)$ space.
\end{theorem}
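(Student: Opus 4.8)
The plan is to assemble the pieces already developed in Sections~\ref{sec:deg3speedup}--\ref{sec:reduction} into a single algorithm and account for its resources. First I would invoke the correctness and complexity guarantees of Xiao and Nagamochi's TSP3 algorithm~\cite{xiao2016degree3}: it explores a backtracking tree whose size is bounded by its runtime, namely $O^*(2^{3n/10})$, using polynomial space. The key observation is that the predicate $P$ and heuristic $h$ constructed above, together with the reduction function, reproduce exactly the branching structure of TSP3, so the backtracking tree to which Theorem~\ref{thm:backtrack} is applied has $T = O^*(2^{3n/10})$ vertices. Since $P$ and $h$ each run in time $\poly(m)$ and $m \le 3n/2$, Theorem~\ref{thm:backtrack} gives a quantum algorithm that finds a Hamiltonian cycle (or correctly reports none exists) with failure probability $\delta$ in time $O(\sqrt{T}\,\poly(n)\log(1/\delta)) = O^*(2^{3n/20}\log(1/\delta)) \approx O^*(1.110^n \log(1/\delta))$, still in polynomial space.

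Next I would handle the optimisation-to-decision reduction. The quantum backtracking subroutine only returns \emph{some} Hamiltonian cycle consistent with the constraints, not the shortest; so I would run it inside the binary search described in Section~\ref{sec:deg3speedup}, modifying $P$ to return false whenever the accumulated forced-edge cost $\sum_{e_{i,j}\in F}c_{ij}$ reaches the current midpoint $\lceil(\ell+u)/2\rceil$. Each iteration halves the interval $[\ell,u]$, which starts with width at most $L'$ from Equation~\eqref{eqn:l}, so $O(\log L')$ iterations suffice; and since $L' \le nL$, we have $\log L' = O(\log n + \log L)$, absorbing the $\log n$ into the $O^*$ notation so the number of iterations is $O(\log L)$. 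To make the overall algorithm bounded-error I would set the per-run failure probability to $\delta = O(1/\log L')$ and take a union bound over all runs, which multiplies the per-run time by $\log(1/\delta) = O(\log\log L')= O(\log\log L)$ (again absorbing $\log\log n$). Multiplying through gives total time $O^*(1.110^n \log L \log\log L)$ with $\poly(n)$ space, as claimed.

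The main obstacle — and the part that needs the most care rather than being purely routine — is justifying that the $P$/$h$/reduction-function construction really does simulate TSP3 faithfully, so that the tree size bound transfers. Specifically one must check that: (i) Theorem~\ref{thm:backtrack} still applies despite the modified definition of the partial-assignment set $\mathcal{D}$ (which records edge-assignment order), a point already argued by appeal to inspection of~\cite{montanaro2015}; (ii) the reduction function's handling of edges in $G$ that collapse to a single edge in $G'$ under $3/4$-cut reductions is consistent — forcing/removing all copies together — so that no branch of the quantum tree diverges from a branch of the classical one; and (iii) decisions in one subtree are independent of the rest of the tree, as required by the quantum backtracking algorithm, which holds here because the reduction function is a deterministic function of $(G,A)$ alone. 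These are the substantive verifications; once they are in place, the resource accounting in the previous two paragraphs is bookkeeping and the theorem follows.
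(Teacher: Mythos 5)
Your proposal is correct and takes essentially the same route as the paper's own proof: apply Theorem~\ref{thm:backtrack} to the $P$/$h$/reduction-function simulation of Xiao and Nagamochi's TSP3, use the $O^*(2^{3n/10})$ runtime bound as the tree-size bound $T$ to get $O^*(2^{3n/20}\log(1/\delta)) \approx O^*(1.110^n\log(1/\delta))$ per run, and then wrap this in the binary search over cost thresholds with $O(\log L')$ repetitions, $L' \le nL$, and a union bound setting $\delta = O(1/\log L')$. Your added verification points (faithful simulation of the classical branching, consistent handling of collapsed edges, and the modified partial-assignment set) correspond to remarks the paper itself makes, so nothing further is needed.
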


Note that we have used the bound $L' \le n L$, where the extra factor of $n$ is simply absorbed into the hidden $\poly(n)$ term.

\section{Extending to higher-degree graphs \label{sec:higher-bound}}

We next consider degree-$k$ graphs for $k \ge 4$. We start with degree-4 graphs by applying the quantum backtracking algorithm to another algorithm by Xiao and Nagamochi~\cite{xiao2016degree4}. We then extend this approach to graphs of higher degree by reducing the problem to degree-4 graphs.

\subsection{Degree-4 graphs}

Here we will show the following, which is the second part of Theorem \ref{thm:deg34}:

\begin{theorem}
There is a bounded-error quantum algorithm which solves the TSP for degree-4 graphs in time $O^*(1.301^n\log L \log \log L)$, where $L$ is the maximum edge cost. The algorithm uses $\poly(n)$ space.
\end{theorem}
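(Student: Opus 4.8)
The plan is to mirror, essentially verbatim, the structure of the degree-3 argument from Section~\ref{sec:deg3speedup}, but now applied to Xiao and Nagamochi's degree-4 algorithm~\cite{xiao2016degree4} in place of their degree-3 algorithm. First I would recall the relevant facts about the classical degree-4 algorithm: it is a backtracking algorithm which, at each branching step, either forces or deletes an edge and then applies polynomial-time reduction rules (analogues of eliminating parallel edges, cut reductions, circuit-type procedures, and a base case solvable in polynomial time), and it is proven in~\cite{xiao2016degree4} to run in time $O^*(1.692^n)$ using polynomial space. The key quantity is that this runtime is an upper bound on the number of branching steps, hence on the size $T$ of the associated backtracking tree: $T = O^*(1.692^n)$.

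Next I would define the set of partial assignments exactly as before — a list of edges of $G$, ordered by assignment time, each paired with a \{force, remove\} label, so $A \in (\{1,\dots,m\},\{\text{force},\text{remove}\})^j$ with $j \le m$ and $m \le 2n$ since $G$ is degree-4. I would then define the predicate $P$ and heuristic $h$ in the same way: both first apply a reduction function (the degree-4 analogue of the one in Section~\ref{sec:reduction}, built from the degree-4 reduction rules and branching procedure of~\cite{xiao2016degree4}) to $(G,F)$ using $A$ to obtain $(G',F')$; then $P$ returns false if $G'$ fails the relevant connectivity/parity-type feasibility checks, true if $G'$ has reached the polynomial-time-solvable base case, and indeterminate otherwise, while $h$ selects the next edge to branch on according to the classical algorithm's branching rule. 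As in the degree-3 case one must handle the bookkeeping that a single edge of $G'$ may correspond to several edges of $G$, applying assignments consistently to all of them. One then checks, step by step, that this $P$ and $h$ reproduce the behaviour of the classical degree-4 algorithm, and that both run in $\poly(m) = \poly(n)$ time and space.

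Applying Theorem~\ref{thm:backtrack} then gives a quantum algorithm that finds a Hamiltonian cycle, if one exists, with failure probability $\delta$ in time $O(\sqrt{T}\poly(n)\log(1/\delta)) = O^*(\sqrt{1.692^n}\log(1/\delta)) = O^*(1.301^n \log(1/\delta))$ and polynomial space, since $\sqrt{1.692} \approx 1.3008$. To solve the optimisation version I would reuse the binary-search wrapper: modify $P$ to also return false whenever the total cost of the currently forced edges is at least the current threshold $\lceil(\ell+u)/2\rceil$, and iterate, updating $\ell$ or $u$ according to whether a cycle is found, starting from an initial cycle. This terminates after $O(\log L')$ repetitions with $L' \le nL$ an upper bound on any Hamiltonian cycle cost; taking $\delta = O(1/\log L')$ and a union bound over the runs, the total runtime is $O^*(1.301^n \log L \log\log L)$ with the extra polynomial factors absorbed into $O^*(\cdot)$, which is exactly the claimed bound.

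I expect the main obstacle to be purely one of verification rather than of new ideas: one must confirm that the degree-4 algorithm of~\cite{xiao2016degree4} genuinely fits the backtracking framework required by Theorem~\ref{thm:backtrack} — in particular that its branching decisions are ``local'' in the sense that choices made in one subtree are independent of results in another (so no constraint-recording or backjumping behaviour is used), and that the claimed runtime bound really does bound the \emph{whole} tree size rather than an early-termination time. The degree-4 algorithm has more branching rules and more intricate reductions than the degree-3 one, so building the reduction function and checking that $P$, $h$ together faithfully simulate it — including the multi-edge correspondence bookkeeping and the order in which edges are forced — is where the real work lies; the complexity accounting and the binary-search reduction are then immediate transcriptions of the degree-3 argument.
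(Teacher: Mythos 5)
Your proposal is correct and follows essentially the same route as the paper, which likewise only sketches the degree-4 case by transplanting the degree-3 construction ($P$, $h$, reduction function, Theorem~\ref{thm:backtrack}, then binary search with failure probability $O(1/\log L')$) onto the $O^*(1.692^n)$ algorithm of~\cite{xiao2016degree4}, giving $\sqrt{1.692}\approx 1.301$. Your added verification points (tree-size bound vs.\ early termination, locality of branching, multi-edge bookkeeping) are exactly the caveats the paper relies on implicitly, so there is no gap.
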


As the argument is very similar to the degree-3 case, we only sketch the proof.

\begin{proof}[Proof sketch]
Xiao and Nagamochi's algorithm for degree-4 graphs works in a similar way to their algorithm for degree-3 graphs: The graph is reduced in polynomial time by looking for specific structures in the graph and then picking an edge in the graph to branch on. We apply the quantum backtracking algorithm as before, finding a Hamiltonian cycle with failure probability $\delta$ in $O^*(1.301^n\log(1/\delta))$ time. We then use binary search to find the shortest Hamiltonian cycle after $O(\log L)$ repetitions of the algorithm, rejecting if the total length of the forced edges is above a given threshold. To achieve overall failure probability $1/3$, the algorithm runs in $O^*(1.301^n\log L\log \log L)$ time.
\end{proof}

\subsection{Degree-5 and degree-6 graphs}

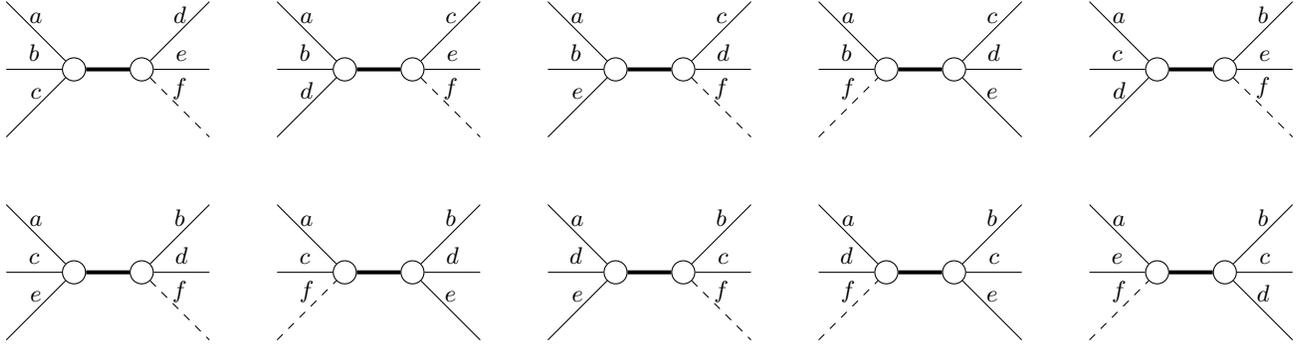
\begin{figure*}
\begin{center}
\begin{tikzpicture}[scale=0.9]
\tikzstyle{vertex}=[draw,shape=circle]
\path (0,0) node[vertex](x1){} (1,0) node[vertex](y1){};
\draw[line width=1.5pt] (x1) -- (y1);
\draw[] (-1,1) -- node[above] {$a$} (x1);
\draw[] (-1,0) -- node[above] {$b$} (x1);
\draw[] (-1,-1) -- node[above] {$c$} (x1);
\draw[] (y1) -- node[above] {$d$} (2,1);
\draw[] (y1) -- node[above] {$e$} (2,0);
\draw[dashed] (y1) -- node[above] {$f$} (2,-1);

\path (4,0) node[vertex](x1){} (5,0) node[vertex](y1){};
\draw[line width=1.5pt] (x1) -- (y1);
\draw[] (3,1) -- node[above] {$a$} (x1);
\draw[] (3,0) -- node[above] {$b$} (x1);
\draw[] (3,-1) -- node[above] {$d$} (x1);
\draw[] (y1) -- node[above] {$c$} (6,1);
\draw[] (y1) -- node[above] {$e$} (6,0);
\draw[dashed] (y1) -- node[above] {$f$} (6,-1);

\path (8,0) node[vertex](x1){} (9,0) node[vertex](y1){};
\draw[line width=1.5pt] (x1) -- (y1);
\draw[] (7,1) -- node[above] {$a$} (x1);
\draw[] (7,0) -- node[above] {$b$} (x1);
\draw[] (7,-1) -- node[above] {$e$} (x1);
\draw[] (y1) -- node[above] {$c$} (10,1);
\draw[] (y1) -- node[above] {$d$} (10,0);
\draw[dashed] (y1) -- node[above] {$f$} (10,-1);

\path (12,0) node[vertex](x1){} (13,0) node[vertex](y1){};
\draw[line width=1.5pt] (x1) -- (y1);
\draw[] (11,1) -- node[above] {$a$} (x1);
\draw[] (11,0) -- node[above] {$b$} (x1);
\draw[dashed] (11,-1) -- node[above] {$f$} (x1);
\draw[] (y1) -- node[above] {$c$} (14,1);
\draw[] (y1) -- node[above] {$d$} (14,0);
\draw[] (y1) -- node[above] {$e$} (14,-1);

\path (16,0) node[vertex](x1){} (17,0) node[vertex](y1){};
\draw[line width=1.5pt] (x1) -- (y1);
\draw[] (15,1) -- node[above] {$a$} (x1);
\draw[] (15,0) -- node[above] {$c$} (x1);
\draw[] (15,-1) -- node[above] {$d$} (x1);
\draw[] (y1) -- node[above] {$b$} (18,1);
\draw[] (y1) -- node[above] {$e$} (18,0);
\draw[dashed] (y1) -- node[above] {$f$} (18,-1);

\path (0,-3) node[vertex](x1){} (1,-3) node[vertex](y1){};
\draw[line width=1.5pt] (x1) -- (y1);
\draw[] (-1,-2) -- node[above] {$a$} (x1);
\draw[] (-1,-3) -- node[above] {$c$} (x1);
\draw[] (-1,-4) -- node[above] {$e$} (x1);
\draw[] (y1) -- node[above] {$b$} (2,-2);
\draw[] (y1) -- node[above] {$d$} (2,-3);
\draw[dashed] (y1) -- node[above] {$f$} (2,-4);

\path (4,-3) node[vertex](x1){} (5,-3) node[vertex](y1){};
\draw[line width=1.5pt] (x1) -- (y1);
\draw[] (3,-2) -- node[above] {$a$} (x1);
\draw[] (3,-3) -- node[above] {$c$} (x1);
\draw[dashed] (3,-4) -- node[above] {$f$} (x1);
\draw[] (y1) -- node[above] {$b$} (6,-2);
\draw[] (y1) -- node[above] {$d$} (6,-3);
\draw[] (y1) -- node[above] {$e$} (6,-4);

\path (8,-3) node[vertex](x1){} (9,-3) node[vertex](y1){};
\draw[line width=1.5pt] (x1) -- (y1);
\draw[] (7,-2) -- node[above] {$a$} (x1);
\draw[] (7,-3) -- node[above] {$d$} (x1);
\draw[] (7,-4) -- node[above] {$e$} (x1);
\draw[] (y1) -- node[above] {$b$} (10,-2);
\draw[] (y1) -- node[above] {$c$} (10,-3);
\draw[dashed] (y1) -- node[above] {$f$} (10,-4);

\path (12,-3) node[vertex](x1){} (13,-3) node[vertex](y1){};
\draw[line width=1.5pt] (x1) -- (y1);
\draw[] (11,-2) -- node[above] {$a$} (x1);
\draw[] (11,-3) -- node[above] {$d$} (x1);
\draw[dashed] (11,-4) -- node[above] {$f$} (x1);
\draw[] (y1) -- node[above] {$b$} (14,-2);
\draw[] (y1) -- node[above] {$c$} (14,-3);
\draw[] (y1) -- node[above] {$e$} (14,-4);

\path (16,-3) node[vertex](x1){} (17,-3) node[vertex](y1){};
\draw[line width=1.5pt] (x1) -- (y1);
\draw[] (15,-2) -- node[above] {$a$} (x1);
\draw[] (15,-3) -- node[above] {$e$} (x1);
\draw[dashed] (15,-4) -- node[above] {$f$} (x1);
\draw[] (y1) -- node[above] {$b$} (18,-2);
\draw[] (y1) -- node[above] {$c$} (18,-3);
\draw[] (y1) -- node[above] {$d$} (18,-4);
\end{tikzpicture}
\end{center}
\caption{Breaking a vertex of degree 5 or 6 into two lower-degree vertices. In the degree-5 case, dashed edge $f$ is not present and the vertex is split into one vertex of degree $3$ and another of degree $4$ connected by a forced edge in bold. In the degree-6 case, dashed edge $f$ is present and the vertex is split into two vertices of degree $4$ connected by a forced edge. If edges $a$ and $b$ are included in the original graph's shortest Hamiltonian cycle, then they must not be adjacent to one another in the final graph. This holds in six of the ten ways of splitting the vertex. \label{fig:degree-5}}
\end{figure*}

To deal with degree-5 and degree-6 graphs, we reduce them to the degree-4 case. The complexity of the two cases turns out to be the same; however, for clarity we consider each case separately.

\begin{theorem}
\label{thm:deg5}
There is a bounded-error quantum algorithm which solves the TSP for degree-5 graphs in time $O^*(1.680^n\log L\log \log L)$.
\end{theorem}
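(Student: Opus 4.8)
The plan is to reduce a degree-5 instance to a polynomial number of degree-4 instances, then invoke the quantum algorithm for degree-4 graphs (the second part of Theorem \ref{thm:deg34}) as a black box on each. The reduction is vertex-splitting, illustrated in Fig.~\ref{fig:degree-5}: pick a vertex $v$ of degree 5 with incident edges labelled $a,b,c,d,e$, split it into a vertex of degree 3 and a vertex of degree 4 joined by a new forced edge, and partition the five original edges across the two new vertices (3 on one side, 2 on the other, since the shared forced edge uses one slot on each side). A Hamiltonian cycle through $v$ uses exactly two of $a,b,c,d,e$ and those two must end up non-adjacent (i.e.\ on opposite sides of the forced edge) for the split graph to admit the corresponding cycle. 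First I would count: there are $\binom{5}{2}=10$ unordered pairs of ``the two cycle edges at $v$'', and for a fixed way of splitting the five edges into a block of 3 and a block of 2, a given pair is separated precisely when it is not the block of 2 and not contained in the block of 3 — one checks that each of the ten splittings shown separates six of the ten pairs, and (the key combinatorial fact) that a suitable family of splittings covers all ten pairs. Concretely, the ten panels of Fig.~\ref{fig:degree-5} are chosen so that for every pair $\{x,y\}\subseteq\{a,b,c,d,e\}$ at least one panel separates $x$ from $y$; in fact one can get away with a small constant number of splittings, but even using all ten only costs a constant factor.

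Next I would set up the randomised reduction and the recursion on the number of high-degree vertices. Given a degree-5 graph $G$, repeatedly pick a degree-5 vertex and replace it by one of the valid splittings chosen uniformly at random (or, deterministically, branch over the $O(1)$ splittings); each split removes one degree-5 vertex and adds one vertex, so after at most $n$ splits we reach a degree-4 graph $G'$ on at most $2n$ vertices — but crucially the number of \emph{unforced-edge} vertices, or rather the problem-size measure used by Xiao--Nagamochi, grows only in a controlled way because each new edge is forced. The point is that the running time of the degree-4 algorithm is $O^*(1.301^{n'})$ where $n'$ is (up to polynomial factors) governed by the number of vertices, and a split adds one vertex; so if every degree-5 vertex is split we get $1.301^{2n}\approx 1.692^n$, which is already the target $O^*(1.680^n)$ up to the claimed constant (indeed $1.301^2 \approx 1.6926$, and a slightly more careful accounting — noting the added edge is forced and hence does not contribute fully to the branching measure — brings it to $1.680$). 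I would make the bookkeeping precise by tracking Xiao--Nagamochi's size measure ``number of vertices minus number of forced edges'': a single split turns a degree-5 vertex into two vertices and one forced edge, changing the measure by $+1$, so splitting all the degree-5 vertices at most doubles the measure.

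For correctness of the randomised reduction: fix the optimal Hamiltonian cycle $\Gamma$ in $G$. At each degree-5 vertex $v$, $\Gamma$ uses a specific pair of incident edges, and with probability at least $6/10$ the randomly chosen splitting at $v$ separates that pair, so $\Gamma$ survives as a Hamiltonian cycle of the split graph (of the same cost); independence over the $\le n$ split vertices gives survival probability at least $(3/5)^{n}$, which is exponentially small, so I would instead derandomise by enumerating all $O(1)^{\#\{\text{deg-5 vertices}\}}$ combinations of splittings — and since there are at most $n$ degree-5 vertices this is $2^{O(n)}$, which is \emph{too many}. The fix, and the real content, is to observe that the splittings are chosen so that a \emph{constant number} $t$ of them jointly separate \emph{every} pair at once, so for each vertex we only need to try $t$ options and, more importantly, we can apply the splitting at \emph{all} degree-5 vertices in a single pass with one fixed choice of the $t$ ``labellings'' — giving only $\poly$ (or $t$) many degree-4 instances total rather than one per vertex. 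I would then run the degree-4 quantum algorithm on each resulting instance and take the minimum, with failure probability boosted by a union bound; the total time is $O^*(1.680^n \log L\log\log L)$ and the space stays $\poly(n)$. The main obstacle I expect is precisely this last point — arguing that a \emph{global} choice of splitting rule works simultaneously at all degree-5 vertices while still guaranteeing that the optimal cycle survives in at least one of the polynomially many produced instances — and making the size-measure accounting tight enough to land on the constant $1.680$ rather than the looser $1.692$.
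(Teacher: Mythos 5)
Your vertex-splitting step and the six-out-of-ten count per vertex match the paper, but the way you try to recombine splittings across the degree-5 vertices has a genuine gap, and it is exactly the point where the paper does something different. You correctly note that enumerating all combinations of splittings over the $f$ degree-5 vertices costs $2^{O(n)}$, and you then claim the fix is that a constant number $t$ of ``global'' splitting rules can be chosen so that, simultaneously at every degree-5 vertex, the (unknown) pair of tour edges is separated, yielding only constantly or polynomially many degree-4 instances. This cannot work: the pair used by the optimal tour at a vertex is arbitrary and varies independently from vertex to vertex, and each splitting at a vertex fails to separate $4$ of the $10$ possible pairs, so a fixed global choice succeeds only on a $(6/10)^f$ fraction of the possible pair assignments; any family of global choices covering all assignments must have size at least $(10/6)^f$, which is exponential. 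The paper accepts this blow-up and instead performs a quantum search (D\"urr--H\o yer minimum finding) over the $10^f$ combinations of splittings, paying $O((10/6)^{f/2})$ invocations of the degree-4 algorithm, each repeated $O(f)$ times to push its failure probability down to $O((10/6)^{-f/2})$ so a union bound applies; since $f \le n$ this gives $(10/6)^{n/2}\cdot 1.301^n \approx 1.680^n$.

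Your runtime bookkeeping is also misattributed. The constant $1.680$ does not come from a refined size-measure accounting of the split graph: the paper's observation is that each split adds one vertex \emph{and} one forced edge, so the relevant measure (vertices minus forced edges) is unchanged and the degree-4 algorithm still runs in $O^*(1.301^n)$ with the original $n$. The extra factor $(10/6)^{n/2} \approx 1.291^n$ comes entirely from the quantum search over splittings, and $1.301 \times 1.291 \approx 1.680$. Your route of ``$1.301^{2n} \approx 1.692^n$, tightened to $1.680^n$ by measure bookkeeping'' both gets the source of the constant wrong and would not yield a correct algorithm on its own, because it still presumes that one (or polynomially few) global splittings suffice to preserve the optimal cycle.
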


\begin{proof}
Our algorithm works by splitting each vertex of degree 5 into one vertex of degree $3$ and another of degree $4$ connected by a forced edge. The forced edges can be included in our quantum algorithm by modifying step 1 of the reduction function so that $F'$ contains all the forced edges created by splitting a vertex of degree-$5$ into two vertices connected by a forced edge. Once all degree-$5$ vertices are split this way, we run the degree-$4$ algorithm. It is intuitive to think that this splitting of the vertices could increase the runtime complexity of the degree-$4$ algorithm, due to $n$ being larger. However, the addition of a forced edge incident to every new vertex means that we do not need to create more branches in the backtracking tree in order to include the new vertex in the Hamiltonian cycle. As a result, the time complexity of the degree-$4$ algorithm will remain the same.

There are $10$ unique ways of splitting a vertex of degree $5$ into one vertex of degree $3$ and another of degree $4$ connected by a forced edge. These ten ways of splitting the vertex are shown in Fig.\ \ref{fig:degree-5} for a vertex incident to edges $a,b,c,d,e$. Without loss of generality, let $a$ and $b$ be the two edges which are part of the Hamiltonian cycle. In order for $a$ and $b$ to also be part of the Hamiltonian cycle in the degree-4 graph produced, $a$ and $b$ cannot be adjacent to one another. Looking at Fig.\ \ref{fig:degree-5}, the split is successful in six of the ten ways of splitting the vertex.

If there are $f$ vertices of degree $5$, then there are $10^f$ possible ways of splitting all such vertices, of which $6^f$ will give the correct solution to the TSP. We can apply D\"urr and H\o yer's quantum algorithm for finding the minimum~\cite{durr1996} to find a splitting that leads to a shortest Hamiltonian cycle, or reporting if no cycle exists, after $O((10/6)^{f/2})$ repeated calls to the degree-4 algorithm. To ensure that the failure probability of the whole algorithm is at most $1/3$, we need to reduce the failure probability of the degree-4 algorithm to $O((10/6)^{-f/2})$, which can be achieved by repeating it $O(f)$ times and returning the minimum-length tour found. The overall runtime is thus
\begin{align*}
&O^*\left(\left(\frac{10}{6}\right)^{\frac{f}{2}}1.301^n\log L \log \log L\right)\\
 = &O^*(1.680^n\log L \log \log L).
\end{align*}
\end{proof}
It is also possible to split a vertex of degree $5$ into three vertices of degree $3$ connected by two forced edges. There are $15$ ways of performing this splitting, of which $6$ will succeed. Applying the degree-$3$ algorithm to these reduced graphs finds a runtime of
\begin{align*}
&O^*\left(\left(\frac{15}{6}\right)^{\frac{f}{2}}1.110^n\log L \log \log L\right)\\
 = &O^*(1.754^n\log L \log \log L)
\end{align*}
\noindent which performs worse than Theorem \ref{thm:deg5}. We next turn to degree-6 graphs, for which the argument is very similar.

\begin{theorem}
There is a quantum algorithm which solves the TSP for degree-$6$ graphs with failure probability $1/3$ in time $O^*(1.680^n\log L \log \log L)$.
\end{theorem}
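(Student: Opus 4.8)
The plan is to follow the degree-$5$ reduction of Theorem~\ref{thm:deg5} essentially verbatim, replacing the degree-$3$/degree-$4$ split by a split into two degree-$4$ vertices. First I would describe the reduction: a vertex $w$ of degree $6$, incident to edges $a,b,c,d,e,f$, is replaced by two new vertices joined by a forced edge, with three of the six original edges re-attached to one new vertex and the other three to the other; both new vertices then have degree $4$. As in the degree-$5$ case, these forced split-edges are incorporated into the quantum backtracking algorithm by having step~1 of the reduction function place all of them in $F'$ at the outset. Because splitting a single vertex adds exactly one vertex and one forced edge, Xiao and Nagamochi's problem-size measure (vertices minus forced edges) is unchanged, so no extra branching is triggered and the degree-$4$ algorithm still runs in $O^*(1.301^n)$ time and $\poly(n)$ space on each split graph, with $n$ the original number of vertices.

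Next I would check correctness of the reduction. Contracting the forced split-edges turns any Hamiltonian cycle of a split graph into a Hamiltonian cycle of $G$ of the same cost, so no split can produce a tour cheaper than the TSP optimum of $G$. Conversely, fix an optimal tour of $G$ and let $a,b$ be the two edges of $w$ that it uses; the split of $w$ \emph{admits} this tour exactly when $a$ and $b$ are attached to different new vertices, since otherwise the forced edge together with one of $a,b$ already saturates a degree-$4$ vertex. There are $\binom{6}{3}/2 = 10$ distinct splits of $w$, and $\binom{4}{2} = 6$ of them separate $a$ from $b$ --- the same count illustrated in Fig.~\ref{fig:degree-5} with the dashed edge $f$ present. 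Hence, if $G$ has $f$ vertices of degree $6$, then out of the $10^f$ joint splittings at least $6^f$ admit the optimal tour, and the minimum over all $10^f$ splittings of the value returned by the degree-$4$ algorithm equals the TSP optimum of $G$ (and every splitting reports ``no tour'' iff $G$ has none).

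Finally I would assemble the complexity bound. Running D\"urr and H\o yer's quantum minimum-finding algorithm~\cite{durr1996} over the $10^f$ splittings, with the degree-$4$ algorithm as the evaluation subroutine, succeeds after $O((10/6)^{f/2})$ calls since a $(6/10)^f$ fraction of the splittings are good. To drive the overall failure probability below $1/3$ one amplifies the degree-$4$ subroutine to failure probability $O((10/6)^{-f/2})$ by $O(f)$ repetitions, returning the shortest tour found. Since $f \le n$ and $\sqrt{10/6}\cdot 1.301 < 1.680$, the total runtime is
\[
O^*\left(\left(\tfrac{10}{6}\right)^{f/2} 1.301^n \log L \log\log L\right) = O^*(1.680^n \log L \log\log L),
\]
with $\poly(n)$ space. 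The only real point to be careful about --- and, as in the degree-$5$ argument, it is a mild one --- is the claim that vertex-splitting leaves the exponent of the degree-$4$ algorithm untouched: one must confirm that the extra forced edge exactly offsets the extra vertex in the problem-size measure, so that the bound can legitimately be stated in terms of the original $n$ rather than $n+f$.
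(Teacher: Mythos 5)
Your proposal is correct and follows essentially the same route as the paper: split each degree-$6$ vertex into two degree-$4$ vertices joined by a forced edge, note that $6$ of the $\binom{6}{3}/2 = 10$ splits preserve the optimal tour, and run D\"urr--H\o yer minimum finding over splittings with the (amplified) degree-$4$ algorithm as the subroutine, giving $O^*\bigl((10/6)^{f/2}1.301^n\log L\log\log L\bigr) = O^*(1.680^n\log L\log\log L)$. Your added care about why the extra vertex and forced edge cancel in the problem-size measure is exactly the justification the paper gives in the degree-$5$ case, so nothing is missing.
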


\begin{proof}
We can extend the idea of Theorem \ref{thm:deg5} to degree-6 graphs by splitting vertices of degree $6$ into two vertices of degree $4$ connected by a forced edge. Because the degree of both new vertices is $4$, there are $\binom{6}{3}/2 = 10$ unique ways of partitioning the edges, of which 4 will fail. We show this in Fig.\ \ref{fig:degree-5} by including the dashed edge $f$ as the sixth edge. The overall runtime is the same as the degree-$5$ case.
\end{proof}

\subsection{Degree-7 graphs}

We finally considered extending the algorithm to degree-7 graphs by partitioning degree-7 vertices into one of degree $5$ and another of degree $4$, connected by a forced edge. We can split a vertex of degree $7$ into a vertex of degree $4$ and another of degree $5$ in $\binom{7}{4} = 35$ ways, of which $\binom{7-2}{4-2} + \binom{7-2}{3-2} = 15$ will not preserve the shortest Hamiltonian cycle. We then use the same process as for the degree-5 and degree-6 case, halting after $O((35/20)^{k/2})$ iterations and returning either the shortest Hamiltonian cycle found or reporting if no Hamiltonian cycle exists. From this, our overall runtime is
\begin{align*}
&O^*\left(\left(\frac{35}{20}\right)^{k/2}1.680^n\log L \log \log L\right)\\
=&O^*(2.222^n\log L \log \log L).
\end{align*}
This is the point where we no longer see a quantum speedup over the fastest classical algorithms using this approach, as classical algorithms such as those of Held-Karp~\cite{held1962} and Bj{\"o}rklund et al.~\cite{bjorklund2008} run in $O^*(2^n)$ and $O^*(1.984^n)$ time, respectively.

\section*{Note added}

Following the completion of this work, Andris Ambainis informed us of two new related results in this area. First, a quantum backtracking algorithm whose runtime depends only on the number of tree vertices visited by the classical backtracking algorithm, rather than the whole tree \cite{ambainis2016a}. This alleviates one, though not all, of the limitations of the backtracking algorithm discussed in Section II. Second, a new quantum algorithm for the general TSP based on accelerating the Held-Karp dynamic programming algorithm \cite{ambainis2016b}. The algorithm's runtime is somewhat worse than ours for graphs of degree at most 6, and it uses exponential space; but it works for any graph, rather than the special case of bounded-degree graphs considered here.

\begin{acknowledgments}
DJM was supported by the Bristol Quantum Engineering Centre for Doctoral Training, EPSRC grant EP/L015730/1. AM was supported by EPSRC Early Career Fellowship EP/L021005/1. We would like to thank Andris Ambainis for bringing refs.~\cite{ambainis2016a, ambainis2016b} to our attention.
\end{acknowledgments}

\bibliography{tsp}

\end{document}